\newcommand{\rw}{\rightarrow}
\theoremstyle{definition}
\newtheorem{definition}{Definition}[section]
\theoremstyle{plain}
\newtheorem{proposition}{Proposition}[section]
\newtheorem{theorem}{Theorem}[section]
\newtheorem*{thm-other}{Theorem}
\theoremstyle{remark}
\newtheorem{remark}{Remark}
\begin{document}

\journal{arXiv}

\begin{frontmatter}


\title{Computation Tree Logic Model Checking Based on Possibility Measures
\thanks{This work was partially supported by National Science
Foundation of China (Grant No: 11271237,61228305) and the Higher School Doctoral
Subject Foundation of Ministry of Education of China (Grant No:20130202110001).}}

\author {Yongming Li\corref{cor1}}
\ead{liyongm@snnu.edu.cn}

\author {Yali Li}
\author {Zhanyou Ma}

\address {College of Computer Science, Shaanxi Normal University, Xi'an, 710062, China}
\cortext[cor1]{Corresponding Author}
\begin{abstract}
In order to deal with the systematic verification with uncertain infromation in possibility theory, Li and Li \cite{li12} introduced model checking of linear-time properties in which the uncertainty is modeled by possibility measures. Xue, Lei and Li \cite{Xue09} defined computation tree logic (CTL) based on possibility measures, which is called possibilistic CTL (PoCTL). This paper is a continuation of the above work. First, we study the expressiveness of PoCTL. Unlike probabilistic CTL, it is shown that PoCTL (in particular, qualitative PoCTL) is more powerful than CTL with respect to their expressiveness. The equivalent expressions of basic CTL formulae using qualitative PoCTL formulae are presented in detail. Some PoCTL formulae that can not be expressed by any CTL formulae are presented. In particular, some qualitative properties of repeated reachability and persistence are expressed using PoCTL formulae. Next, adapting CTL model-checking algorithm, a method to solve the PoCTL model-checking problem and its time complexity are discussed in detail. Finally, an example is given to illustrate the PoCTL model-checking method.
\end{abstract}

\begin{keyword}Computation tree logic;
possibilistic Kripke structure;  possibility measure;  qualitative property;   quantitative property.
\end{keyword}

\end{frontmatter}

\baselineskip 20pt

\section{Introduction}

Model checking \cite{EGP92} is a formal
verification technique which allows for desired behavioral properties of a given system to
be verified on the basis of a suitable model of the system through systematic inspection
of all states of the model. It is widely used in the design and analysis of computer systems \cite{Dovier02,Clarke08}. Although it has been rapidly gaining in importance in recent years, classical model checking can not deal with verification of those systems (e.g.,concurrent systems) dealing with uncertainty information. Such as, the development of most large and complex systems is inevitably involved with lots of  uncertainty and inconsistency information.

In order to handle the systematic
verification with uncertain information in probability,  Hart and Sharir \cite{Hart06} in 1986
applied probability theory to model
checking  in which the uncertainty is modeled by probability measures. Baier and Katoen \cite{Baier08} systematically introduced the principle and method of model checking based on probability
measures and related applications with Markov chain models for probabilistic systems.
For the past few years, there were even more applications on probability model checking  in verifying properties of systems with uncertain information (see e.g. \cite{Barbuti00}).

On the other hand,
 Zadeh proposed the theory of fuzzy sets in 1965 \cite{Zadeh65}, and possibility measures \cite{Sugeno74,Zadeh78} are a development of classical measures as a branch of the theory of fuzzy
sets from then. As a comparison, possibility measures (more general, fuzzy measures) focus on non-additive situation, while probability measures are used for additive situation. Most problems in real situations are complicated and non-additive. As a matter of fact, fuzziness seems to pervade most human perception and thinking processes as noted by Zadeh, especially, modeling human-centered systems, including biomedical systems (\cite{lin02}), criminal trial systems, decision making systems(\cite{grabisch00}), linguistic quantifiers (\cite{ying06,cui08}), and knowledge base (\cite{dubois94}). Therefore, it is necessary to study the theory and its applications of model checking on non-deterministic systems of non-additive measure, especially, fuzzy measure. In this respect, Li and Li \cite{li12} introduced model checking of linear-time properties in which the uncertainty is modeled by possibility measures and initiated the model checking based on possibility measures. Xue, Lei and Li \cite{Xue09} defined computation tree logic based on possibility measures, which is called possibilistic computation tree logic (PoCTL, in short).

 Although  we have studied the quantitative and qualitative properties of PoCTL in \cite{Xue09}, there are many important issues that still have not been addressed. The first important problem is the expressiveness of PoCTL: whether any CTL formulae can be expressed by PoCTL or vise versa. As we know, probabilistic CTL and CTL are not comparable with each other (\cite{Baier08}). This allows probabilistic CTL to be used to do model checking of real-world problems, which can not be tackled by classical CTL model checking. The surprising result of this paper is that CTL is a proper subclass of PoCTL. The second problem is looking for the method to solve PoCTL model-checking problems. As we know, there are effective algorithms and automated tools to solve CTL model-checking problems. As we just mentioned, CTL is a proper subclass of PoCTL, it is nontrivial to study whether there are effective algorithms to solve the PoCTL model-checking problems. We shall give complete study to the above two problems in this paper.

The content of this paper is arranged as follows. In Section 2 we recall the notion of possibilistic Kripke structures, the related possibility measures induced by the possibilistic Kripke structures, and the main notions of PoCTL introduced in \cite{Xue09}. In Section 3, the equivalence of PoCTL formulae and CTL formulae is investigated, and the differences between PoCTL formulae and CTL  formulae are discussed. An important result, CTL is a proper subclass of PoCTL, is obtained.
 Section 3 also presents qualitative properties of repeated reachability and persistence.
The PoCTL model checking approach is presented in Section 4, and an illustrative example is given in Section 5.
The paper ends with conclusion section.

\section {Preliminaries}

Transition systems or Kripke structures are key models for model checking. Corresponding to possibilistic model checking, we have the notion of possibilistic Kripke structures, which is defined as follows.

\begin{definition} \cite{li12}\label{de:pkripke}
A possibilistic Kripke structure is a tuple $M=(S,P,I,AP,L)$, where

(1) $S$  is a countable, nonempty set of states;

(2) $P:S\times S\longrightarrow [0,1]$ is the transition possibility distribution such that for all states $s$, $\bigvee\limits_{s^{'}\in S}P(s,s^{'})=1$ ;

(3) $I:S\longrightarrow[0,1]$ is the initial distribution, such that $\bigvee\limits_{s\in S}I(s)=1$ ;

(4) $AP$ is a set of atomic propositions;

(5) $L:S\longrightarrow 2^{AP}$  is a labeling function that labels a state $s$ with those atomic propositions in $AP$ that are supposed to hold in $s$.

Furthermore, if the set  $S$  and  $AP$ are finite sets, then $M=(S,P,I,AP,L)$ is called a
finite possibilistic Kripke structure.
\end{definition}

\begin{remark} (1) In Definition \ref{de:pkripke}, we require the transition possibility distribution and initial distribution are normal, i.e., $\vee_{s'\in S}P(s,s')=1$ and $\vee_{s\in S}I(s)=1$, where we use $\vee X$ or $\wedge X$ to represent the least upper bound (or supremum) or the largest lower bound (or infimum) of the subset $X\subseteq [0,1]$, respectively. These conditions are corresponding to the transition probability distribution and probability initial distribution in probabilistic Kripke structure or Markov chain (\cite{Baier08}), where the supremum operation is replaced by the sum operation. They are the main differences between possibilistic Kripke structure and probabilistic Kripke structure. In fact, in fuzzy uncertainty, the order instead of the additivity is one of the most important factors to be considered.

(2) The transition possibility distribution $P: S\times S\longrightarrow [0,1]$ can also be represented by a fuzzy matrix. For convenience, this fuzzy matrix is also written as $P$, i.e., $$P=(P(s,t))_{s,t\in S},$$ and $P$ is also called the (fuzzy) transition matrix of $M$. In \cite{li12}, we also used the symbol $A$ to represent transition matrix. For the fuzzy matrix $P$, its transitive closure is denoted by $P^+$. When $S$ is finite, and if $S$ has $N$ elements, i.e., $N=|S|$, then $P^+=P\vee P^2\vee\cdots\vee P^N$ \cite{li05}, where $P^{k+1}=P^k\circ P$ for any positive integer number $k$. Here, we use the symbol $\circ$ to represent the max-min composition operation of fuzzy matrixes. Recall that the max-min composition operation
of fuzzy matrices is similar to ordinary matrix multiplication operation, that is, let ordinary multiplication and addition operations of real numbers be replaced by minimum and maximum operations of real numbers (\cite{Zadeh78}).

For a possibilistic Kripke structure $M=(S,P,I,AP,L)$, using $P^+$, we can get another possibilistic Kripke structure $M^+=(S,P^+,I,AP,L)$.

(3) The authors in \cite{hajek95} also used the notion of fuzzy possibilistic Kripke structures as the models of qualitative possibilistic logic QFL, which is formally defined as a structure $K=(W,\Vdash,\pi)$ where $W$ is a nonempty set of worlds, $\Vdash$ maps $AP\times W$ into the truth value set $\{0,1/n,2/n,\cdots,1\}(n\geq 1)$, and $\pi$ is a normalized positive fuzzy subset of $W$, i.e., a mapping $\pi: W\longrightarrow [0,1]$ such that $\pi(w)>0$ for each $w$ and $\bigvee_{w\in W}\pi(w)=1$. Obviously, the notion of fuzzy possibilistic Kripke structure just defined is not equivalent to our notion of possibilistic Kripke structures. Since our notion of possibilistic Kripke structures is obvious a generalization of classical Kripke structures (see \cite{EGP92}) into fuzzy cases and a possibilistic version of (discrete-time) Markov chains as defined in Definition 10.1 in \cite{Baier08}. So we still use the name of possibilistic Kripke structures here, but it has no connection with that defined in \cite{hajek95}. The much more related notion is (discrete-time) fuzzy Markov chains \cite{kruse87} or (discrete-time) possibilistic Markov chains (\cite{dubois94}) or possibilistic Markov processes (\cite{janssen96}) which are used to model certain fuzzy systems. The only difference between possibilistic Kripke structures and fuzzy (or possibilistic) Markov chains lies in that there is no labeling function in the definition of fuzzy (or possibilistic) Markov chains. In \cite{dubois94}, possibilistic Markov chains are used to model the evolution of updating problem in a knowledge base that describes the state of evolving system. Uncertainty comes from incomplete knowledge about the knowledge base, ``one may only have some idea about what is/are the most plausible state(s) of the system, among possible one''(\cite{dubois94}). This type of incomplete knowledge was described in terms of possibility distribution in \cite{dubois94}, the degree of transition possibility distribution denotes the plausible degree of the next state. This provides us a sort of justification for degrees of transitions in possibilistic Kripke structures.
\end{remark}

The states $s$ with $I(s)>0$ are considered as the initial states. For state $s$ and $T\subseteq S$, let $P(s,T)$ denote the possibility of moving from $s$ to some state $t\in T$ in a single step, that is,
\begin{equation*}
P(s,T)=\vee_{t\in T}P(s,t).
\end{equation*}

Paths in possibilistic Kripke structure $M$ are infinite paths in the underlying digraph. They are
defined as infinite state sequences $\pi=s_{0}s_{1}s_{2}\cdots\in S^{w}$  such that  $P(s_{i},s_{i+1})>0$ for all $i\in I$.
Let $Paths(M)$ denote the set of all paths in $M$, and $Paths_{fin}(M)$ denote the set of finite path
fragments $s_{0}s_{1}\cdots s_{n}$ where $n\geq 0$ and $P(s_{i},s_{i+1})>0$ for $0\leq i\leq n$. Let $Paths_M(s)$ ($Paths(s)$ if $M$ is understood) denote the set of all
paths in $M$ that start in state $s$. Similarly, $Paths_{M-fin}(s)$ ($Paths_{fin}(s)$ if $M$ is understood) denotes the set of finite path fragments $s_{0}s_{1}\cdots s_{n}$ such that $s_{0}=s$.
The set of direct successors (called  $Post$ ) and direct predecessors (named  $Pre$ ) are defined
as follows:
\begin{equation*}
Post(s)=\{s'\in S\mid P(s,s')> 0\};~~ Pre(s)=\{s' \in S\mid P(s',s)> 0\}.
\end{equation*}

Given a possibilistic Kripke structure $M$, the cylinder set of $\hat{\pi}=s_0\cdots s_n\in Paths_{fin}(M)$ is defined as (\cite{Baier08}) $$Cyl(\hat{\pi})=\{\pi\in Paths(M) | \hat{\pi}\in Pref(\pi)\},$$ where $Pref(\pi)=\{\pi^{\prime} | \pi^{\prime}$ is a finite prefix of $\pi\}$. Then as shown in \cite{li12},  $\Omega=2^{Paths(M)}$ is the algebra generated by $\{Cyl(\hat{\pi})\mid\hat{\pi}\in Paths_{fin}(M)\}$ on $Paths(M)$. That is to say, $\Omega=2^{Paths(M)}$ is the unique subalgebra of $2^{Paths(M)}$ which is closed under unions and intersections containing $\{Cyl(\hat{\pi}) | \hat{\pi}\in Pref(\pi)\}$.

\begin{definition}\label{def:possibility measure} \cite{li12} For a possibilistic Kripke structure $M$, a function $Po^M: Paths(M)\rightarrow [0,1]$ is defined as follows:
\begin{equation}\label{eq:possibility measure-path}
Po^M(\pi)=I(s_{0})\wedge\bigwedge\limits_{i=0}^\infty P(s_{i},s_{i+1})
\end{equation}
for any $\pi=s_{0}s_{1} \cdots, \pi\in Paths(M).$
Furthermore, we define
\begin{equation}\label{eq:possibility measure}
Po^M(E)=\vee\{Po^M(\pi)\mid\pi\in E\}
\end{equation}
for any $E\subseteq Paths(M)$, then, we have a well-defined function $$Po^M:2^{Paths(M)}\longrightarrow [0,1],$$ $Po^M$ is called the possibility measure over $\Omega=2^{Paths(M)}$ as it has the properties stated in Theorem \ref{th:possibility measure}. If $M$ is clear from the context, then $M$ is omitted and we simply write $Po$ for $Po^M$.

\end{definition}

\begin{theorem}\label{th:possibility measure} \cite{li12} $Po$ is a possibility measure on $\Omega=2^{Paths(M)}$, i.e., $Po$ satisfies the following conditions:

 (1) $Po(\varnothing)=0$, $Po(Paths(M))=1$;

 (2) $Po(\bigcup\limits_{i\in I}A_{i})=\bigvee\limits_{i\in I}Po(A_{i})$ for any $A_{i}\in\Omega$, $i\in I$.

\end{theorem}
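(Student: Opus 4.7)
The statement breaks into three assertions, and the plan is to dispatch each one directly from Definition~\ref{def:possibility measure} combined with the normalization conditions on $I$ and $P$. The only delicate point is showing $Po(Paths(M))=1$, where the countability (rather than finiteness) of $S$ prevents us from simply picking a maximizing state at each step.

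For $Po(\varnothing)=0$, I would invoke the standard convention that the supremum of the empty subset of $[0,1]$ is $0$, so $Po(\varnothing)=\vee\{Po(\pi)\mid\pi\in\varnothing\}=0$ is immediate from~(\ref{eq:possibility measure}). The distributivity property~(2) is equally routine: using that $\pi\in\bigcup_{i\in I}A_i$ iff $\pi\in A_i$ for some $i$, I would just rearrange the supremum, writing
\begin{equation*}
Po\bigl(\bigcup_{i\in I}A_i\bigr)=\bigvee\bigl\{Po(\pi)\mid\exists i\in I,\ \pi\in A_i\bigr\}=\bigvee_{i\in I}\bigvee\{Po(\pi)\mid\pi\in A_i\}=\bigvee_{i\in I}Po(A_i),
\end{equation*}
which is a general fact about suprema applied through definition~(\ref{eq:possibility measure}). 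Note this also subsumes monotonicity and arbitrary (not just countable) unions, matching the statement.

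The main work is $Po(Paths(M))=1$. The plan is an $\varepsilon$-approximation argument using normalization. Fix $\varepsilon>0$. Since $\bigvee_{s\in S}I(s)=1$, choose $s_0\in S$ with $I(s_0)>1-\varepsilon$. Inductively, having chosen $s_0,\ldots,s_n$, use $\bigvee_{s'\in S}P(s_n,s')=1$ to pick $s_{n+1}$ with $P(s_n,s_{n+1})>1-\varepsilon$; in particular $P(s_n,s_{n+1})>0$, so the resulting sequence $\pi=s_0s_1s_2\cdots$ is a genuine path in $M$. Then by~(\ref{eq:possibility measure-path}),
\begin{equation*}
Po(\pi)=I(s_0)\wedge\bigwedge_{i=0}^\infty P(s_i,s_{i+1})\geq 1-\varepsilon,
\end{equation*}
hence $Po(Paths(M))\geq Po(\pi)\geq 1-\varepsilon$. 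Since $\varepsilon$ was arbitrary and $Po$ takes values in $[0,1]$, I conclude $Po(Paths(M))=1$.

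The only subtle step is the $\varepsilon$-construction in the last paragraph, because in a countable state space the suprema $\bigvee_{s}I(s)$ and $\bigvee_{s'}P(s,s')$ need not be attained; if $S$ were finite one could take $\varepsilon=0$ and obtain a path with $Po(\pi)=1$ directly. Everything else reduces to sup-manipulations, so this is the only place where the proof uses more than bookkeeping.
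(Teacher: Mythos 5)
Your proof is correct: the paper states this theorem as a result quoted from \cite{li12} without reproducing a proof, and your argument --- the empty-supremum convention giving $Po(\varnothing)=0$, the rearrangement of suprema giving distributivity over arbitrary unions, and the $\varepsilon$-chain built from the normalization conditions $\bigvee_{s\in S}I(s)=1$ and $\bigvee_{s'\in S}P(s,s')=1$ to show $Po(Paths(M))=1$ --- is exactly the standard argument for this fact, with the countable-state subtlety correctly identified and handled. The only cosmetic point is that you should fix $0<\varepsilon<1$ at the outset, since otherwise $P(s_n,s_{n+1})>1-\varepsilon$ does not by itself guarantee $P(s_n,s_{n+1})>0$, which is what you need for the constructed sequence to be a genuine element of $Paths(M)$.
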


\begin{theorem}\label{th:possibility on cyl} \cite{li12} Let $M$ be a  possibilistic finite Kripke structure. Then the possibility measure of the cylinder sets is given by $Po(Cyl(s_{0}\cdots s_{n}))=I(s_{0})\wedge \bigwedge\limits_{i=0}^{n-1} P(s_{i},s_{i+1})$ when $n>0$ and $Po(Cyl(s_{0}))=I(s_{0})$.
\end{theorem}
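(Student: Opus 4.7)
The plan is to unfold the definition of $Po$ on the cylinder set and then separate the argument into an easy upper bound and a slightly more delicate lower bound that exploits finiteness of $S$ together with the normalization of $P$. By Definition \ref{def:possibility measure}, we have
\begin{equation*}
Po(Cyl(s_0 \cdots s_n)) = \bigvee\Bigl\{Po(\pi) \,\big|\, \pi \in Cyl(s_0\cdots s_n)\Bigr\},
\end{equation*}
and every $\pi \in Cyl(s_0\cdots s_n)$ has the form $\pi = s_0 s_1 \cdots s_n t_{n+1} t_{n+2}\cdots$, so by \eqref{eq:possibility measure-path},
\begin{equation*}
Po(\pi) = I(s_0) \wedge \bigwedge_{i=0}^{n-1} P(s_i,s_{i+1}) \wedge P(s_n,t_{n+1}) \wedge \bigwedge_{j\geq n+1} P(t_j,t_{j+1}).
\end{equation*}

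For the upper bound, since extra factors can only decrease a meet, each summand above is $\leq I(s_0) \wedge \bigwedge_{i=0}^{n-1} P(s_i,s_{i+1})$, so the supremum is bounded above by this quantity.

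For the lower bound I would use the finiteness of $S$ together with the normalization condition $\bigvee_{s'\in S} P(s,s') = 1$ from Definition \ref{de:pkripke}(2). Because $S$ is finite, each such supremum is in fact a maximum: for every state $s$ there exists some $s'\in S$ with $P(s,s')=1$. Starting from $s_n$ and iterating, I can construct by (countable) recursive choice an infinite tail $t_{n+1}, t_{n+2}, \ldots$ with $P(s_n,t_{n+1})=1$ and $P(t_j,t_{j+1})=1$ for all $j\geq n+1$. The resulting $\pi^{\ast} = s_0\cdots s_n t_{n+1} t_{n+2}\cdots$ lies in $Cyl(s_0\cdots s_n)$ and satisfies
\begin{equation*}
Po(\pi^{\ast}) = I(s_0) \wedge \bigwedge_{i=0}^{n-1} P(s_i,s_{i+1}) \wedge 1 \wedge 1 \wedge \cdots = I(s_0) \wedge \bigwedge_{i=0}^{n-1} P(s_i,s_{i+1}),
\end{equation*}
which matches the upper bound. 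The case $n=0$ is identical: the empty meet disappears and the same tail construction applied at $s_0$ produces a path with $Po(\pi^{\ast}) = I(s_0)$.

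The main (only) obstacle is the lower bound, and more specifically the justification that the normalized suprema are attained so that the tail extension actually achieves value $1$ at every step; this is where the hypothesis that $M$ is a \emph{finite} possibilistic Kripke structure is essential, since in the general countable case $\bigvee_{s'} P(s,s')=1$ need not be a maximum and one would only get the conclusion up to arbitrary $\varepsilon$, although by idempotency of $\wedge$ even this would suffice to conclude the identity via a routine limit argument.
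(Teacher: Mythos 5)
The paper itself gives no proof of this theorem; it is imported verbatim from \cite{li12}, so there is nothing internal to compare against. Your argument is correct and is the natural one: the upper bound is immediate because adjoining further meet factors can only decrease $Po(\pi)$, and the lower bound is witnessed by extending $s_0\cdots s_n$ with a tail all of whose transitions have possibility $1$, which exists because the normalization $\bigvee_{s'\in S}P(s,s')=1$ is attained as a maximum over the finite state space (and such a tail keeps $\pi^{\ast}$ inside $Paths(M)$, hence inside the cylinder). Your closing observation is also sound: for countable $S$ one can only choose transitions of possibility exceeding $1-\varepsilon$, but the resulting paths still push the supremum up to the claimed value, so finiteness is a convenience rather than an essential hypothesis here.
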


 \begin{remark}\label{re:m-s}
     (1) For paths starting in a certain (possibly noninitial) state $s$, the same construction is applied to the possibilistic Kripke structure
 $M_s$ that results from $M$ by letting $s$ be the unique initial state. Formally, for $M=(S,P,I,AP,L)$ and state $s$,  $M_s$ is defined by $M_s=(S,P,s,AP,L)$ , where $s$ denotes an initial distribution with only one initial state $s$.

 (2) For a probabilistic Kripke structure $M$, by the intension property of probability measures, the induced probability measure (\cite{Baier08}), which is defined on the $\sigma$-algebra of $2^{Paths(M)}$ generated by cylinder sets, is uniquely determined by its definition on cylinder sets. On the other hand, by the extensional property of possibility measures, the induced possibility measure in Eq. (\ref{eq:possibility measure}) is uniquely determined by its definition on single paths as shown in Eq.(\ref{eq:possibility measure-path}). The method to define  probability measure on a probabilistic Kripke structure can not be applied to that of possibility measure on possibilistic Kripke structure, and vice versa. For more comparisons of possibility measures and probability measures, we refer to \cite{Didier14,Drakopoulos11,grabisch00,li12} and references therein.
\end{remark}

\begin{definition} \cite{Xue09} (Syntax of PoCTL) {\sl PoCTL state formulae} over the set $AP$ of atomic propositions are formed according to the following grammar:
\begin{center}
$\Phi ::= true\mid a \mid\Phi_{1} \wedge \Phi_{2}\mid \neg \Phi\mid Po_{J}(\varphi)$
\end{center}
where $a\in AP$, $\varphi$ is a PoCTL path formula and $J\subseteq [0,1]$ is an interval with rational bounds.

{\sl PoCTL path formulae} are formed according to the following grammar:

\begin{center}
$\varphi::=\bigcirc \Phi \mid \Phi_{1}\sqcup \Phi_{2}\mid \Phi_{1}\sqcup^{\leq n} \Phi_{2} $
\end{center}
where $ \Phi$, $ \Phi_{1}$, and $ \Phi_{2}$ are state formulae and $n\in\mathbb{N}$.
\end{definition}

\begin{definition}\label{def:semantics of PoCTL}\cite{Xue09} (Semantics of PoCTL) Let $a\in AP$ be an atomic proposition, $M=(S,P,I,AP,L)$ be a possibilistic Kripke structure, state $s\in S$, $\Phi$, $\Psi$ be PoCTL state formulae, and $\varphi$ be a PoCTL path formula. {\sl The satisfaction relation $\models$} is defined {\sl for state formulae} by
\begin{eqnarray*}
s\models a  & {\rm iff} \ a\in L(s);\\
s\models\neg\Phi & {\rm iff}\ s\not\models\Phi; \\
s\models\Phi\wedge\Psi  & {\rm iff} \ s\models\Phi \ {\rm and}\ s\models\Psi;\\
s\models Po_{J}(\varphi)  & {\rm iff} \ Po(s\models \varphi)\in J, \ {\rm where}\ Po(s\models \varphi)=Po^{M_s}(\{\pi | \pi\in Paths(s), \pi\models \varphi\}).
\end{eqnarray*}

For path $\pi$, {\sl the satisfaction relation $\models$ for path formulae} is defined by
\begin{eqnarray*}
\pi\models\bigcirc\Phi  & {\rm iff} \ \pi[1]\models\Phi;\\
\pi\models\Phi\sqcup\Psi  & {\rm iff} \  \exists k\geq0,\pi[k]\models\Psi
\ {\rm and}\
 \pi[i]\models\Phi {\rm \ for\ all} \ 0\leq i\leq k-1;\\
\pi \models \Phi\sqcup^{\leq n}\Psi & {\rm iff} \ \exists 0\leq k\leq n, (\pi[k]\models \Psi\wedge(\forall 0\leq i< k),\pi[i]\models \Phi)).
\end{eqnarray*}
where if $\pi=s_0s_1s_2\cdots$, then $\pi[k]=s_k$ for any $k\geq 0$.
\end{definition}

In particular, the path formulae $\lozenge\Phi$ (``eventually'') and $\square\Phi$ (``always'') have the semantics
$$\pi=s_0s_1\cdots\models \lozenge\Phi {\rm \ iff}\ s_j\models \Phi {\rm \ for\ some\ } j\geq 0,$$
$$\pi=s_0s_1\cdots\models \square\Phi {\rm \ iff}\ s_j\models \Phi {\rm \ for\ all\ } j\geq 0.$$ Alternatively, $\lozenge\Phi=true\sqcup \Phi$.

\begin{definition}\cite{Xue09} (Syntax of qualitative PoCTL)
{\sl State formulae in the qualitative fragment of PoCTL} (over $AP$) are formed according to
the following grammar:
\begin{center}
$\Phi ::= true\mid a \mid\Phi_{1} \wedge \Phi_{2}\mid \neg \Phi\mid Po_{>0}(\varphi)\mid Po_{=1}(\varphi)$
\end{center}
where $a\in AP$, $\varphi$ is a path formula formed according to the following grammar:
\begin{center}
$\varphi::=\bigcirc \Phi \mid \Phi_{1}\sqcup \Phi_{2}$
\end{center}
where $ \Phi$, $ \Phi_{1}$ and $ \Phi_{2}$ are state formulae.
\end{definition}

As a subclass of PoCTL, the semantics of qualitative PoCTL can be defined as that of PoCTL.

Since we shall compare the expressiveness of PoCTL and CTL, let us recall the definition of CTL.

\begin{definition}\cite{Baier08}(Syntax of CTL)
State formulae in the fragment of CTL (over $AP$) are formed according to
the following grammar:
\begin{center}
$\Phi ::= true\mid a \mid\Phi_{1} \wedge \Phi_{2}\mid \neg \Phi\mid \exists\varphi \mid \forall\varphi$
\end{center}
where $a\in AP$, $\varphi$ is a path formula formed according to the following grammar:
\begin{center}
$\varphi::=\bigcirc \Phi \mid \Phi_{1}\sqcup \Phi_{2}$
\end{center}
where $ \Phi$, $ \Phi_{1}$ and $ \Phi_{2}$ are state formulae.
\end{definition}

\begin{definition}\label{def:semantics of CTL}\cite{Baier08} (Semantics of CTL) Let $a\in AP$ be an atomic proposition, $M=(S,P,I,AP,L)$ be a Kripke structure without terminal state (i.e., $\forall s\in S$, $\exists s^{\prime}\in S$, $(s,s^{\prime})\in P$), state $s\in S$, $\Phi$, $\Psi$ be CTL state formulae, and $\varphi$ be a CTL path formula. The satisfaction relation $\models$ is defined for state formulae by
\begin{eqnarray*}
s\models a  & {\rm iff} \ a\in L(s);\\
s\models\neg\Phi & {\rm iff}\ s\not\models\Phi \\
s\models\Phi\wedge\Psi  & {\rm iff} \ s\models\Phi \ {\rm and}\ s\models\Psi;\\
s\models \exists\varphi  & {\rm iff} \ \pi\models \varphi\ {\rm for}\ {\rm some}\ \pi\in \ Paths(s);\\
s\models \forall\varphi  & {\rm iff} \ \pi\models \varphi\ {\rm for}\ {\rm all}\ \pi\in \ Paths(s).
\end{eqnarray*}

For path $\pi$, the satisfaction relation $\models$ for path formulae is defined by
\begin{eqnarray*}
\pi\models\bigcirc\varphi  & {\rm iff} \ \pi[1]\models\varphi;\\
\pi\models\Phi\sqcup\Psi  & {\rm iff} \  \exists k\geq0,\pi[k]\models\Psi
\ {\rm and}\
 \pi[i]\models\Phi {\rm \ for\ all} \ 0\leq i\leq k-1.
\end{eqnarray*}
\end{definition}

\begin{remark}\label{re:problems} Since we use the PoCTL formula $Po_J(\varphi)$ to denote the possibility measure of the paths satisfying $\varphi$, i.e., $s\models Po_J(\varphi)$ iff $Po(s\models \varphi)\in J$, PoCTL is a possibility measure extension of classical CTL. Both the possibilistic and probabilistic CTL solve certain uncertainty of errors or other stochastic behaviors occurring in various real-world applications. As shown in \cite{Baier08}, probabilistic CTL and CTL are not comparable with respect to their expressiveness. This allows probabilistic CTL to be used to solve the model-checking problems of real-world applications, which can not be tackled by classical model-checking algorithms. With regard to expressiveness of PoCTL, there was no further results on the comparisons between possibilistic CTL and classical CTL. We did not know whether PoCTL can express CTL or vise versa. We shall study the expressiveness of PoCTL in the next section and discuss PoCTL model checking then.
\end{remark}

\section{The expressiveness of PoCTL}

In this section, we study how to define the equivalence between PoCTL formulae and  CTL formulae. We intend to  discuss the equivalence of PoCTL formulae and  CTL formulae and  resolve the problem  whether any  PoCTL formula can be expressed by a  CTL formula or not.

In this section, we always assume that $M$ is a finite possibilistic Kripke structure.

\begin{definition}\label{def:satisfaction set}
 For a possibilistic  Kripke structure $M$ with state space $S$, if $\Phi$ is a state formula, let $Sat_{M}(\Phi)$, or briefly $Sat(\Phi)$, denote $\{s\in S\mid s\models\Phi\}$.
\end{definition}

\begin{definition}\label{def:equivalence of poctl}
 PoCTL formulae $\Phi$ and $\Psi$ are called equivalent, denoted $\Phi\equiv\Psi$, if $Sat(\Phi)=Sat(\Psi)$ for all finite possibilistic Kripke structures $M$ over $AP$.
\end{definition}

\begin{definition}\label{def:equivalence of poctl and ctl}
 A PoCTL formula $\Phi$ is equivalent to a CTL formula $\Psi$, denoted $\Phi\equiv\Psi$, if $Sat_{M}(\Phi)=Sat_{TS(M)}(\Psi)$ for any  finite  possibilistic Kripke structure $M=(S,P,I,AP,L)$, where $TS(M)=(S,\rightarrow,I^{\prime},AP,L)$ is defined by $s\rightarrow s'$ iff $Po(s,s')>0$, and $s\in I^{\prime}$ iff $I(s)>0$. Obviously, $Paths_M(s)=Paths_{TS(M)}(s)$, wo we use the same symbol $Paths(s)$ to denote $Paths_M(s)$ and $Paths_{TS(M)}(s)$ in the following.
\end{definition}

\begin{remark}\label{re:equivalence of poctl and ctl}
Definition \ref{def:equivalence of poctl and ctl} is a key notion, analogous to the one for probabilistic CTL. There are other ways to define an equivalence between CTL and PoCTL formulae. We shall give some discussion of this topic in Section 3.4.
\end{remark}

\begin{theorem}\label{th:negation}
 Let $p\in [0,1]$ be a rational number, $\varphi$ an arbitrary PoCTL path formula, then, we have
\begin{eqnarray}\label{eq:negation}
Po_{<p}(\varphi)\equiv \neg Po_{\geq p}(\varphi).
\end{eqnarray}
\end{theorem}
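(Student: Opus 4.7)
The statement is a semantic identity, so the plan is to unwind the semantics of $Po_J(\varphi)$ from Definition \ref{def:semantics of PoCTL} and observe that membership in $[0,p)$ is the Boolean complement of membership in $[p,1]$ inside $[0,1]$.

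First, I fix an arbitrary finite possibilistic Kripke structure $M=(S,P,I,AP,L)$ and an arbitrary state $s\in S$. By Definition \ref{def:equivalence of poctl}, it suffices to prove that $s\models Po_{<p}(\varphi)$ iff $s\models \neg Po_{\geq p}(\varphi)$. Writing $q = Po(s\models\varphi) = Po^{M_s}(\{\pi\in Paths(s)\mid \pi\models\varphi\})$, which is a well-defined number in $[0,1]$ by Theorem \ref{th:possibility measure}, I unfold the semantics: $s\models Po_{<p}(\varphi)$ iff $q\in[0,p)$ iff $q<p$, and $s\models Po_{\geq p}(\varphi)$ iff $q\in[p,1]$ iff $q\geq p$.

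Next, applying the clause for negation, $s\models\neg Po_{\geq p}(\varphi)$ iff $s\not\models Po_{\geq p}(\varphi)$ iff $q\not\geq p$ iff $q<p$. Combining the two equivalences yields $s\models Po_{<p}(\varphi)$ iff $s\models \neg Po_{\geq p}(\varphi)$, so $Sat_M(Po_{<p}(\varphi)) = Sat_M(\neg Po_{\geq p}(\varphi))$. Since $M$ and $s$ were arbitrary, equivalence in the sense of Definition \ref{def:equivalence of poctl} follows.

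There is essentially no obstacle: the argument reduces to the observation that $[0,p)$ and $[p,1]$ partition $[0,1]$, which is exactly what makes the Boolean negation on the outside match the bound-flipping on the subscript. The only subtlety worth flagging is that one has to use the well-definedness of $Po(s\models\varphi)$ as a single number in $[0,1]$ (so that the law of excluded middle applies to the comparison $q\geq p$ vs.\ $q<p$); this is guaranteed by Definition \ref{def:possibility measure} together with Theorem \ref{th:possibility measure}. No properties specific to possibility measures, to the path formula $\varphi$, or to the rationality of $p$ are actually needed — the rationality hypothesis only appears because PoCTL syntactically restricts the interval endpoints.
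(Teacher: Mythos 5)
Your proposal is correct and follows essentially the same route as the paper: the paper computes $Sat(Po_{<p}(\varphi)) = S - Sat(Po_{\geq p}(\varphi)) = Sat(\neg Po_{\geq p}(\varphi))$ by set complementation, which is just your pointwise argument ($q<p$ iff $q\not\geq p$) phrased at the level of satisfaction sets. Your added remark about the well-definedness of $Po(s\models\varphi)$ as a single number in $[0,1]$ is a reasonable (if implicit in the paper) observation, and nothing further is needed.
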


\begin{proof}
For any $p\in [0,1]$, for any possibilistic Kripke structure $M$ with state space $S$, we have
\begin{eqnarray*}
 Sat(Po_{<p}(\varphi))&=& \{s\mid Po(s\models \varphi)< p\}\\
  &=& S-\{s\mid Po(s\models\varphi)\geq p\}\\
  &=& S-Sat(Po_{\geq p}(\varphi))\\
 &=&Sat(\neg Po_{\geq p}(\varphi)).
\end{eqnarray*}
The last equality follows from the fact $Sat(\neg\Phi)=S-Sat(\Phi)$ for any PoCTL state formula $\Phi$. Therefore, $Po_{<p}(\varphi)\equiv \neg Po_{\geq p}(\varphi)$.
\end{proof}

Dual to Theorem \ref{th:negation}, we have
\begin{eqnarray}\label{eq:dual negation}
Po_{>p}(\varphi)\equiv \neg Po_{\leq p}(\varphi)
\end{eqnarray}
for any rational number $p\in [0,1]$ and path formula $\varphi$. Then it is easy to prove that $$Po_{(p,q)}(\varphi)\equiv\neg Po_{\leq p}(\varphi)\wedge\neg Po_{\geq q}(\varphi).$$
Although the qualitative fragment of PoCTL state formulae only allows possibility bounds of the form $>0$ and $=1$, bounds of the form $=0$ and $<1$ are also definable as $$Po_{=0}(\varphi)\equiv \neg Po_{>0}(\varphi),\ Po_{<1}(\varphi)\equiv\neg Po_{=1}(\varphi).$$

\subsection{CTL formulae are equivalent to PoCTL formulae}

\begin{theorem}\label{th:exists}
Let $\varphi$ be any CTL path formula. Then, we have
\begin{eqnarray}\label{eq:exists}
\exists\varphi\equiv Po_{>0}(\varphi).
\end{eqnarray}
\end{theorem}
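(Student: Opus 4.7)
The plan is to prove the set equality $Sat_M(Po_{>0}(\varphi)) = Sat_{TS(M)}(\exists\varphi)$ for each finite possibilistic Kripke structure $M$ by unfolding the two semantic definitions and verifying both inclusions directly. This should be read as an inductive step in a translation from CTL to PoCTL: the state subformulae of $\varphi$ are handled by induction, so that $\pi\models\varphi$ has the same meaning with respect to $M$ and to $TS(M)$ once the inductively translated PoCTL state subformulae are used in place of their CTL counterparts.

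First I would unfold definitions. By Definition \ref{def:semantics of PoCTL}, $s\in Sat_M(Po_{>0}(\varphi))$ iff $Po^{M_s}(\{\pi\in Paths(s)\mid\pi\models\varphi\})>0$, which by Definition \ref{def:possibility measure} is $\bigvee\{Po^{M_s}(\pi)\mid\pi\in Paths(s),\ \pi\models\varphi\}>0$. On the other hand, by Definition \ref{def:semantics of CTL}, $s\in Sat_{TS(M)}(\exists\varphi)$ iff some $\pi\in Paths(s)$ satisfies $\varphi$. Note that the CTL semantics is well-defined on $TS(M)$: the normalization condition $\bigvee_{s'\in S}P(s,s')=1$ together with the finiteness of $S$ forces the supremum to be attained, so every state has at least one successor in $TS(M)$.

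For the direction $\exists\varphi\Rightarrow Po_{>0}(\varphi)$, I would pick a witness path $\pi=s_0s_1s_2\cdots\in Paths(s)$ with $\pi\models\varphi$ and show $Po^{M_s}(\pi)>0$. In $M_s$ the initial distribution puts $I(s_0)=1$, and $P(s_i,s_{i+1})>0$ for every $i\geq 0$ by definition of a path. The key obstacle is that $Po^{M_s}(\pi)$ is an infimum of infinitely many positive numbers, which need not be positive in general. This is exactly where the finiteness assumption on $M$ is used: since $S$ is finite, the set $\{P(s,s')\mid s,s'\in S,\ P(s,s')>0\}$ is a finite set of positive reals and has a positive minimum $\varepsilon>0$; hence $P(s_i,s_{i+1})\geq\varepsilon$ for every $i$ and therefore $\bigwedge_{i=0}^{\infty}P(s_i,s_{i+1})\geq\varepsilon>0$. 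Combined with $I(s_0)=1$, this gives $Po^{M_s}(\pi)\geq\varepsilon$, so the supremum defining $Po(s\models\varphi)$ is strictly positive and $s\models Po_{>0}(\varphi)$.

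For the converse $Po_{>0}(\varphi)\Rightarrow\exists\varphi$, suppose $\bigvee\{Po^{M_s}(\pi)\mid\pi\in Paths(s),\ \pi\models\varphi\}>0$. Then the set under the supremum is nonempty (an empty supremum would be $0$), so there is a path $\pi\in Paths(s)$ with $\pi\models\varphi$, which is exactly the witness needed for $s\models\exists\varphi$ in $TS(M)$. The main obstacle in the whole argument is the finiteness trick in the forward direction that turns the infinite $\bigwedge$ into a finite minimum; the converse is essentially automatic once the inductive equivalence of the two path-level satisfaction relations is in place.
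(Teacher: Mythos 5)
Your proposal is correct and follows essentially the same route as the paper: unfold both satisfaction sets and prove the two inclusions, with finiteness of $M$ guaranteeing that a witness path has positive possibility. You actually spell out the step the paper only asserts (that finiteness forces $\bigwedge_{i=0}^{\infty}P(s_i,s_{i+1})\geq\varepsilon>0$ for some positive minimum $\varepsilon$ over the finitely many positive transition values), which is a welcome elaboration rather than a deviation.
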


\begin{proof}
 Let $M$ be a  finite possibilistic Kripke structure, then we have $Sat_M(Po_{>0}(\varphi))=\{s\mid Po(s\models\varphi)>0\}$, and $Sat_{TS(M)}(\exists\varphi)=\{s\mid\exists\pi\in Paths(s),\pi\models\varphi\}$.

Assume $s\in Sat(Po_{>0}(\varphi))$, then, state $s$ satisfies  $Po(s\models\varphi)>0$, and it follows that $\{s\mid\exists\pi\in Paths(s),\pi\models\varphi\}\not=\emptyset$, i.e.,  $s\in Sat_{TS(M)}(\exists\varphi) $. Therefore, $Sat_M(Po_{>0}(\varphi))\subseteq Sat_{TS(M)}(\exists\varphi)$.

Conversely, if $s\in Sat_{TS(M)}(\exists\varphi)$, then $\exists\pi\in Paths(s),\pi\models\varphi$. Since $M$ is finite and $\pi\in Paths(s)$, it follows that $Po^{M_s}(\pi)>0$, and thus $Po(s\models\varphi)\geq Po^{M_s}(\pi)>0$. Therefore, $s\in Sat_M(Po_{>0}(\varphi))$. This shows that $Sat_{TS(M)}(\exists\varphi)\subseteq Sat_M(Po_{>0}(\varphi))$.

The above shows that $Sat_{TS(M)}(\exists\varphi)=Sat_M(Po_{>0}(\varphi)$. Therefore, we have the required equality.
\end{proof}

To show the further relationship between CTL and PoCTL, we need the existential normal form of CTL formulae.

\begin{definition}\label{def:ENF}\cite{Baier08} For $a\in AP$, the set of CTL state formulae in existential normal form (ENF, in short) is given by
\begin{center}
$\Phi ::= true\mid a \mid\Phi_{1} \wedge \Phi_{2}\mid \neg \Phi\mid\exists\bigcirc\Phi\mid \exists(\Phi_{1}\sqcup\Phi_{2})\mid\exists\square\Phi$.
\end{center}
\end{definition}

\begin{theorem}\label{th:ENF}\cite{Baier08}
For each CTL formulae there exists an equivalent CTL formulae in ENF.
\end{theorem}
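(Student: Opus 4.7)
The plan is to proceed by structural induction on the CTL formula $\Phi$. The atomic cases $true$ and $a \in AP$ lie in ENF by definition, and if the subformulae of a propositional combination are already in ENF by induction hypothesis, then so are $\Phi_1 \wedge \Phi_2$ and $\neg\Phi$. It therefore suffices to translate the four quantified state formulae $\exists\bigcirc\Phi$, $\exists(\Phi_1 \sqcup \Phi_2)$, $\forall\bigcirc\Phi$, and $\forall(\Phi_1 \sqcup \Phi_2)$ into ENF, assuming inductively that $\Phi$, $\Phi_1$, $\Phi_2$ have already been replaced by equivalent ENF formulae.

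Two of these four cases are vacuous, as $\exists\bigcirc\Phi$ and $\exists(\Phi_1 \sqcup \Phi_2)$ are themselves among the generators of ENF. For $\forall\bigcirc\Phi$, I would invoke the standard duality $\forall\bigcirc\Phi \equiv \neg\exists\bigcirc\neg\Phi$, which follows directly from Definition~\ref{def:semantics of CTL} together with the assumption that $M$ has no terminal states (so that $Paths(s) \neq \varnothing$, making the universal and existential quantifiers over $Paths(s)$ genuine duals); the resulting formula is in ENF.

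The main obstacle is the case $\forall(\Phi_1 \sqcup \Phi_2)$, since ENF offers no dual until construct and one has to route through both the existential until and $\exists\square$. I plan to establish the equivalence
\begin{equation*}
\forall(\Phi_1 \sqcup \Phi_2) \;\equiv\; \neg\exists\bigl(\neg\Phi_2 \,\sqcup\, (\neg\Phi_1 \wedge \neg\Phi_2)\bigr) \;\wedge\; \neg\exists\square\neg\Phi_2.
\end{equation*}
The key semantic observation is that a path $\pi \in Paths(s)$ violates $\Phi_1 \sqcup \Phi_2$ iff either $\Phi_2$ fails at every state of $\pi$, or else, letting $k_0$ be the least index with $\pi[k_0] \models \Phi_2$ and $i_0 < k_0$ the least index with $\pi[i_0] \not\models \Phi_1$, the position $i_0$ witnesses $\neg\Phi_2 \sqcup (\neg\Phi_1 \wedge \neg\Phi_2)$ along $\pi$ (at every position $j < i_0$ we have $\neg\Phi_2$ by minimality of $k_0$, and at $i_0$ itself we have $\neg\Phi_1 \wedge \neg\Phi_2$). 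Conversely, each of these two behaviours on a single path from $s$ refutes $\forall(\Phi_1 \sqcup \Phi_2)$ at $s$. Disjoining the two disjuncts into the scope of $\exists$ and then negating as in the $\forall\bigcirc$ case yields the displayed equivalence, and the right-hand side is built entirely from ENF generators (here using that $\exists\square\Psi$ is one of them).

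Once these four inductive cases are discharged, the structural induction closes and the theorem follows. The only case requiring genuine effort is the $\forall\sqcup$ rewriting above; the rest reduce either to the induction hypothesis or to a one-line semantic check.
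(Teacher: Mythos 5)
Your proof is correct and is exactly the standard argument: the paper does not actually prove Theorem~\ref{th:ENF} but cites it from \cite{Baier08}, and your structural induction using the dualities $\forall\bigcirc\Phi\equiv\neg\exists\bigcirc\neg\Phi$ and $\forall(\Phi_1\sqcup\Phi_2)\equiv\neg\exists\bigl(\neg\Phi_2\sqcup(\neg\Phi_1\wedge\neg\Phi_2)\bigr)\wedge\neg\exists\square\neg\Phi_2$ is precisely the proof given there (and the latter equivalence is what Proposition~\ref{pro:forall}(2) of this paper transcribes into PoCTL). Your case analysis on the least index $k_0$ where $\Phi_2$ holds and the least $i_0<k_0$ where $\Phi_1$ fails is the right justification of the until-duality, so there is no gap.
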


\begin{theorem}\label{th:forall}
For any CTL formula, there exists an equivalent qualitative PoCTL formula.
\end{theorem}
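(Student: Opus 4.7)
The plan is to do structural induction on the CTL formula after first reducing to existential normal form. By Theorem \ref{th:ENF}, every CTL formula is equivalent to one built from $true$, $a\in AP$, $\wedge$, $\neg$, $\exists\bigcirc\Phi$, $\exists(\Phi_1\sqcup\Phi_2)$, and $\exists\square\Phi$, so only these seven constructors need to be translated into qualitative PoCTL. The propositional cases are immediate, since $true$, $a$, $\wedge$, $\neg$ appear verbatim in both syntaxes with identical semantics (Definitions \ref{def:semantics of PoCTL} and \ref{def:semantics of CTL}); hence the inductive translations compose at the boolean level.

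For $\exists\bigcirc\Phi$ and $\exists(\Phi_1\sqcup\Phi_2)$, Theorem \ref{th:exists} gives the translation directly. Letting $\Phi'$, $\Phi_1'$, $\Phi_2'$ denote the qualitative PoCTL equivalents of $\Phi$, $\Phi_1$, $\Phi_2$ supplied by the inductive hypothesis and instantiating Theorem \ref{th:exists} at $\varphi=\bigcirc\Phi'$ and at $\varphi=\Phi_1'\sqcup\Phi_2'$ respectively, one obtains $\exists\bigcirc\Phi\equiv Po_{>0}(\bigcirc\Phi')$ and $\exists(\Phi_1\sqcup\Phi_2)\equiv Po_{>0}(\Phi_1'\sqcup\Phi_2')$ in the sense of Definition \ref{def:equivalence of poctl and ctl}, because under both semantics the satisfaction of these path formulae depends only on the underlying digraph $TS(M)$ via the shared path set $Paths(s)$.

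The hard case is $\exists\square\Phi$, since $\square$ is not in the qualitative PoCTL path-formula grammar. The plan is to exploit the finiteness of $M$ together with the normalisation condition $\bigvee_{s'\in S}P(s,s')=1$ of Definition \ref{de:pkripke}, which guarantees that from every state at least one successor has possibility exactly $1$. In a finite possibilistic Kripke structure, $\exists\square\Phi$ holds at $s$ in $TS(M)$ iff $s\in Sat(\Phi)$ and $s$ reaches, along a $\Phi$-path of the underlying digraph, some state $t$ lying on a $\Phi$-cycle. I would encode this as a qualitative PoCTL formula of the shape $\Phi'\wedge Po_{>0}(\Phi'\sqcup\Psi)$, where the inner state formula $\Psi$ captures ``$t$ sits on a $\Phi'$-cycle''; $\Psi$ is built from nested combinations of $Po_{>0}(\bigcirc\,\cdot)$, $Po_{>0}(\,\cdot\,\sqcup\,\cdot)$, and (to mimic the greatest-fixed-point flavour of $\exists\square$) $Po_{=1}$-operators whose satisfaction is forced to be graph-determined by the normalisation condition.

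The main obstacle, and the step that demands the most care, is precisely the design of $\Psi$ and the verification that $Sat_M(\Phi'\wedge Po_{>0}(\Phi'\sqcup\Psi))$ depends on $M$ only through $TS(M)$ (as required by Definition \ref{def:equivalence of poctl and ctl}); generic occurrences of $Po_{=1}$ are value-sensitive, so each such subformula has to be positioned so that the sensitivity cancels once the surrounding $Po_{>0}$-guards and boolean structure are in place. Once $\Psi$ is in hand, the soundness/completeness check reduces to a standard pigeonhole/finiteness argument: in a finite digraph every infinite $\Phi$-path eventually loops inside a strongly connected component of $\Phi$-states, so ``$s$ reaches a $\Phi$-cycle witness via a $\Phi$-path'' is both necessary and sufficient for $\exists\square\Phi$. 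Combined with the easier cases above, this produces an equivalent qualitative PoCTL formula for an arbitrary CTL formula.
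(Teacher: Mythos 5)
Your reduction to ENF and your handling of the propositional connectives, $\exists\bigcirc\Phi$, and $\exists(\Phi_1\sqcup\Phi_2)$ coincide with the paper's proof, which is nothing more than Theorem~\ref{th:ENF} followed by Theorem~\ref{th:exists}. The divergence --- and the gap --- is the case $\exists\square\Phi$. The paper disposes of it by applying Theorem~\ref{th:exists} to $\varphi=\square\Phi$ as well, i.e.\ $\exists\square\Phi\equiv Po_{>0}(\square\Phi)$, treating $\square\Phi$ as an admissible path formula of the qualitative fragment, exactly as it does elsewhere in Section~3 (e.g.\ $\forall\square\Phi\equiv Po_{=0}(\lozenge\neg\Phi)$ and $Po_{=1}(\square Po_{=1}(\lozenge a))$); this mirrors the qualitative fragment of PCTL in \cite{Baier08}, where $\square$ is included as a primitive path former precisely because it is not derivable from $\sqcup$ under qualitative bounds. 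You are right that $\square$ is absent from the grammar as literally printed, but the intended reading makes this case a one-line instance of Theorem~\ref{th:exists}, whose proof does not depend on which path formula $\varphi$ is.

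Your substitute construction for $\exists\square\Phi$ is not a proof. The state formula $\Psi$ meant to express ``$t$ lies on a $\Phi'$-cycle'' is never exhibited, and the obstacles to exhibiting it are substantive rather than cosmetic: (i) by Definition~\ref{def:equivalence of poctl and ctl} the translation must be a single formula that works uniformly over \emph{all} finite possibilistic Kripke structures, so no unrolling depending on $|S|$ is available; (ii) ``$t$ lies on a $\Phi$-cycle reachable from $t$ through $\Phi$-states'', equivalently ``$t$ starts an infinite $\Phi$-path'', is exactly the property $\exists\square\Phi$ you are trying to express, so defining $\Psi$ this way is circular unless some genuinely new device breaks the circle; and (iii) the hope that $Po_{=1}$-subformulae can be ``forced to be graph-determined'' runs directly against Theorem~\ref{th:proper}, which shows $Po_{=1}(\lozenge a)$ is value-sensitive --- the normalisation condition guarantees every state has \emph{some} successor of possibility $1$, not that the successor you need has possibility $1$. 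Until $\Psi$ is written down and these points are discharged, the $\exists\square$ case, and hence the theorem, remains unproved in your argument.
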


\begin{proof}
By Theorem \ref{th:ENF}, each CTL formula can be transformed into an equivalent formula in ENF. Then, by Theorem \ref{th:exists}, each CTL formula in ENF is equivalent to a qualitative PoCTL formula. Combining Theorem \ref{th:ENF} and Theorem \ref{th:exists}, it follows that each CTL formula is equivalent to a qualitative PoCTL formula.
\end{proof}

Theorem \ref{th:forall} shows that CTL is a subclass of PoCTL.
We concretely write some equivalent formulae as follows, most of which do not hold in probabilistic CTL as declared in \cite{Baier08}.

\begin{proposition}\label{pro:exists}
For any CTL formulae $\Phi$ and $\Psi$, we have

(1) $\exists\lozenge \Phi \equiv Po_{>0}(\lozenge \Phi)$,

(2) $\exists\bigcirc \Phi \equiv Po_{>0}(\bigcirc \Phi)$,

(3)  $\exists\square \Phi \equiv Po_{>0}(\square \Phi)$, and

(4) $\exists(\Phi \sqcup \Psi) \equiv Po_{>0}(\Phi\sqcup \Psi)$.
\end{proposition}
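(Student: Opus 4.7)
The plan is to split the four parts into two categories. Parts (1), (2), and (4) are immediate corollaries of Theorem~\ref{th:exists}: each asserts $\exists\varphi \equiv Po_{>0}(\varphi)$ for a specific PoCTL path formula $\varphi$, namely $\bigcirc\Phi$ for (2), $\Phi\sqcup\Psi$ for (4), and $true \sqcup \Phi$ for (1) (using the abbreviation $\lozenge\Phi \equiv true \sqcup \Phi$ noted after Definition~\ref{def:semantics of PoCTL}). So for these three items I would simply invoke the theorem and move on.

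Part (3) is the one that needs care, because $\square\Phi$ is not a path formula in the basic PoCTL or CTL grammar but only a derived/semantic construct. I would therefore redo the proof pattern of Theorem~\ref{th:exists} with $\varphi$ replaced by $\square\Phi$, since the original argument only uses the definition of $Po(s\models\varphi)$ as a supremum of $Po^{M_s}(\pi)$ over paths satisfying $\varphi$, and is indifferent to the syntactic form of $\varphi$. Concretely, I would establish both containments of $Sat_M(Po_{>0}(\square\Phi))$ and $Sat_{TS(M)}(\exists\square\Phi)$. The forward direction is immediate: if $Po(s\models\square\Phi) > 0$ then the supremum defining it is taken over a nonempty set, yielding a witness $\pi \in Paths(s)$ with $\pi \models \square\Phi$, hence $s \models \exists\square\Phi$. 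For the converse, given such a witness $\pi = s_0s_1\cdots$, the crucial observation is that in a finite possibilistic Kripke structure, $Po^{M_s}(\pi) = I(s_0) \wedge \bigwedge_{i\geq 0} P(s_i,s_{i+1})$ is an infimum over values drawn from a finite set of strictly positive numbers, hence is itself strictly positive, so $Po(s \models \square\Phi) \geq Po^{M_s}(\pi) > 0$.

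The principal conceptual point, rather than a technical obstacle, is to emphasise why this works here but fails in probabilistic CTL, as the authors highlight in the paragraph preceding the proposition. In the probabilistic setting an individual path typically carries measure zero, so a sup-over-paths argument collapses. Here the non-additivity of possibility measures, combined with the finiteness of $M$, ensures that \emph{every} path has strictly positive possibility, which is exactly what makes all four equivalences simultaneously reduce to the basic statement ``a path witnessing $\varphi$ exists.'' The narrative emphasis in the write-up should therefore be to document which items follow from Theorem~\ref{th:exists} directly and which require the small re-derivation for the derived operator $\square$, rather than to grind through additional calculations.
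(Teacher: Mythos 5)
Your proposal is correct and matches the paper's intent: the paper states Proposition~\ref{pro:exists} without a separate proof, treating all four items as instances of Theorem~\ref{th:exists}, exactly as you do for (1), (2), and (4). Your extra care in re-running the argument for $\square\Phi$ in item (3) --- since $\square$ is only a derived path formula, and the key fact is that finiteness of $M$ forces $Po^{M_s}(\pi)>0$ for every $\pi\in Paths(s)$ --- is a sound and slightly more scrupulous rendering of the same approach, not a different one.
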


\begin{proposition}\label{pro:forall}
For any CTL formulae $\Phi$ and $\Psi$, we have

(1) $\forall\bigcirc \Phi\equiv Po_{=0}(\bigcirc\neg\Phi)$,

(2) $\forall (\Phi\sqcup \Psi)\equiv Po_{=0}(\neg\Psi\sqcup (\neg\Phi\wedge\neg\Psi))\wedge Po_{=0}(\square\neg\Psi)$,

(3)  $\forall\lozenge\Phi\equiv Po_{=0}(\square \neg\Phi)$, and

(4) $\forall\square\Phi\equiv Po_{=0}(\lozenge\neg\Phi)$.
\end{proposition}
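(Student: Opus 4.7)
The plan is to derive each of the four equivalences by combining the standard universal--existential CTL duality with Proposition \ref{pro:exists} and the already-noted identity $Po_{=0}(\varphi)\equiv\neg Po_{>0}(\varphi)$. The CTL dualities I will invoke are
$$\forall\bigcirc\Phi\equiv\neg\exists\bigcirc\neg\Phi,\quad \forall\lozenge\Phi\equiv\neg\exists\square\neg\Phi,\quad \forall\square\Phi\equiv\neg\exists\lozenge\neg\Phi,$$
together with the less familiar until-duality
$$\forall(\Phi\sqcup\Psi)\equiv\neg\bigl[\exists\square\neg\Psi\,\vee\,\exists(\neg\Psi\sqcup(\neg\Phi\wedge\neg\Psi))\bigr],$$
each of which holds on arbitrary transition systems, hence on $TS(M)$.

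For parts (1), (3), and (4) the argument is a one-line chain: apply the duality, replace $\exists$ by $Po_{>0}$ using Proposition \ref{pro:exists}, and absorb the outer negation into the bound via $\neg Po_{>0}(\varphi)\equiv Po_{=0}(\varphi)$. For example, (1) reads
$$\forall\bigcirc\Phi\equiv\neg\exists\bigcirc\neg\Phi\equiv\neg Po_{>0}(\bigcirc\neg\Phi)\equiv Po_{=0}(\bigcirc\neg\Phi),$$
and (3), (4) are analogous using parts (3) and (1) of Proposition \ref{pro:exists} respectively.

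Part (2) needs a little more care and rests on a path-level dichotomy. I would first verify that, for a single path $\pi$, $\pi\not\models\Phi\sqcup\Psi$ iff either $\pi\models\square\neg\Psi$ or $\pi\models\neg\Psi\sqcup(\neg\Phi\wedge\neg\Psi)$: indeed, if $\Psi$ never holds on $\pi$ then the first disjunct applies, whereas if $k$ is the least index with $\pi[k]\models\Psi$, then the failure of $\Phi\sqcup\Psi$ forces some $i<k$ with $\pi[i]\not\models\Phi$, and the earliest such $i$ witnesses $\neg\Psi\sqcup(\neg\Phi\wedge\neg\Psi)$. Lifting this pathwise identity to the state level and distributing $\neg\exists$ over the disjunction gives
$$\forall(\Phi\sqcup\Psi)\equiv\neg\exists(\neg\Psi\sqcup(\neg\Phi\wedge\neg\Psi))\,\wedge\,\neg\exists\square\neg\Psi,$$
after which Proposition \ref{pro:exists}(4) and Proposition \ref{pro:exists}(3), together with the $\neg Po_{>0}\equiv Po_{=0}$ identity, yield the stated form.

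The main obstacle is the path-level dichotomy for until used in (2); once that short case analysis is in place, all four parts reduce to routine syntactic manipulations using the results of this section. A minor technical point worth noting is that the passage from $\exists$ to $Po_{>0}$ silently relies on $Po^{M_s}(\pi)>0$ for every $\pi\in Paths(s)$ when $M$ is finite, which is precisely what was exploited in the proof of Theorem \ref{th:exists} and which legitimises applying Proposition \ref{pro:exists} in the current setting.
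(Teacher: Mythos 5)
Your proposal is correct and follows essentially the route the paper intends: the paper states Proposition \ref{pro:forall} as a direct consequence of the ENF dualities (Theorem \ref{th:ENF}) combined with Theorem \ref{th:exists}/Proposition \ref{pro:exists} and the identity $Po_{=0}(\varphi)\equiv\neg Po_{>0}(\varphi)$, which is exactly the chain you carry out, including the path-level until-dichotomy underlying the ENF rewriting of $\forall(\Phi\sqcup\Psi)$. Your closing remark that the step from $\exists$ to $Po_{>0}$ needs finiteness of $M$ is also the right caveat, matching the paper's counterexample showing part (3) can fail for infinite possibilistic Kripke structures.
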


\begin{remark}
The above propositions may not hold in infinite possibilistic Kripke structure. We  give a counterexample for  Proposition \ref{pro:forall} (3).

Assume Proposition \ref{pro:forall} (3) holds in any infinite possibilistic  Kripke structure $M$ for $\Phi=a\in AP$, that is $\forall\lozenge a\equiv Po_{=0}(\square \neg a)$ such that state $s$ fulfills both the formula $Po_{=0}(\square \neg a)$ and $\forall\lozenge a$ or none of them.  Fig.1 gives an infinite possibilistic Kripke structure $M=(S,P,I,AP,L)$, in which states are represented by nodes and transitions by labeled edges.
State names are depicted inside the ovals. Initial states are indicated by having an incoming arrow without source. We can see that $Paths(s_{0})=\{s_{0}s_{1}s_{2}\cdots s_{k}t^{w} | k\geq 0\}$. For this $M$, we have $Po(s_{0}\models\square \neg a)=\vee Po\{\pi\in Paths(s_{0})\mid\pi\models\square \neg  a\}=0$, and it follows that $s_{0}\in Sat_M(Po_{=0}(\square \neg  a))$. But $s_{0}s_{1}s_{2}\cdots\not\models\lozenge a$, i.e., $s_{0} \notin Sat_{TS(M)}(\forall\lozenge a)$. This contradicts the assumption that $\forall\lozenge a\equiv Po_{=0}(\square \neg  a)$.

\begin{figure}[ht]
\begin{center}
\includegraphics[scale=0.4]{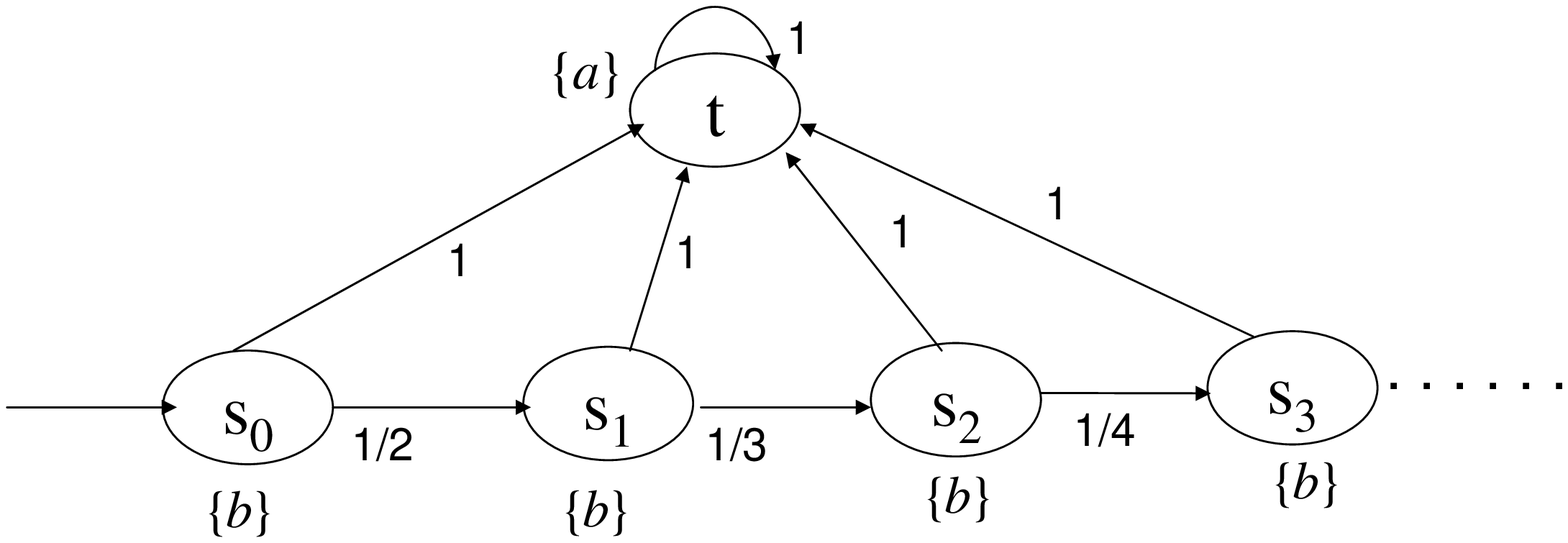}
\center{Fig.1.}An infinite possibilistic Kripke structure $M$.
\vspace{-0.3cm}
\end{center}
\end{figure}
\end{remark}

\subsection{CTL is a proper subclass of PoCTL}

\begin{theorem}\label{th:proper}
There is no CTL formula that is equivalent to $Po_{=1}(\lozenge a)$.
\end{theorem}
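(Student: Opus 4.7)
The strategy is to show that $Po_{=1}(\lozenge a)$ can distinguish two finite possibilistic Kripke structures that share a common underlying transition system. The key observation is that, in Definition \ref{def:equivalence of poctl and ctl}, a candidate CTL equivalent $\Psi$ is evaluated on $TS(M)$, which retains only the positive-possibility edges and forgets the actual weights. Hence any CTL formula is insensitive to rescaling of possibilities as long as the zero/nonzero pattern is preserved, whereas $Po_{=1}$ is very sensitive to it. The plan is therefore to exhibit two structures $M_1, M_2$ with $TS(M_1) = TS(M_2)$ on which $Po_{=1}(\lozenge a)$ disagrees at some state, and then deduce the theorem by a short contradiction argument.

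The construction I would use is minimal: take states $\{s, t, u\}$ with $L(t) = \{a\}$ and $L(s) = L(u) = \emptyset$, absorbing self-loops $P(t,t) = P(u,u) = 1$, and both edges $s \to t$ and $s \to u$ with positive possibility. In $M_1$ set $P(s,t) = 1$ and $P(s,u) = c$, while in $M_2$ swap the weights to $P(s,t) = c$ and $P(s,u) = 1$, for some fixed rational $c \in (0,1)$. Both assignments are normalized, and the underlying transition relation $\{(s,t),(s,u),(t,t),(u,u)\}$ is literally the same, so $TS(M_1) = TS(M_2)$.

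Evaluating on $M_s$ with the Dirac initial distribution at $s$ (see Remark \ref{re:m-s}), Definition \ref{def:possibility measure} yields in $M_1$ that the path $s t^{\omega}$ has possibility $1 \wedge 1 \wedge 1 \wedge \cdots = 1$ and satisfies $\lozenge a$, so $Po^{M_1}(s \models \lozenge a) = 1$ and $s \in Sat_{M_1}(Po_{=1}(\lozenge a))$. In $M_2$ every path satisfying $\lozenge a$ must traverse the edge $s \to t$ of weight $c$, and paths ending up in the $u$-loop never see $a$; the supremum in Eq.~(\ref{eq:possibility measure}) is therefore $c < 1$, so $s \notin Sat_{M_2}(Po_{=1}(\lozenge a))$.

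The contradiction is now immediate. Suppose a CTL formula $\Psi$ were equivalent to $Po_{=1}(\lozenge a)$ in the sense of Definition \ref{def:equivalence of poctl and ctl}; then $Sat_{M_i}(Po_{=1}(\lozenge a)) = Sat_{TS(M_i)}(\Psi)$ for $i = 1, 2$. Because $TS(M_1) = TS(M_2)$, the two right-hand sides coincide, whereas the two left-hand sides differ on $s$ by the calculation above. I do not foresee a real obstacle beyond locating the right separating pair; the main conceptual point, which the proof isolates, is that passing from $M$ to $TS(M)$ collapses the quantitative information that $Po_{=1}$ genuinely needs, so no CTL formula can track it.
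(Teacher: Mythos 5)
Your proof is correct and follows essentially the same strategy as the paper's: exhibit two finite possibilistic Kripke structures with identical underlying transition systems $TS(M_1)=TS(M_2)$ on which $Sat(Po_{=1}(\lozenge a))$ differs (the paper uses the weights $1$ versus $0.5$ on its Fig.~2/Fig.~3 pair, you use $1$ versus $c$), and then derive the contradiction from Definition~\ref{def:equivalence of poctl and ctl}. Your three-state construction and the possibility computations are valid, so no changes are needed.
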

\begin{proof}
Assume that there is a CTL  formula $\Phi$ such that $\Phi\equiv Po_{=1}(\lozenge a)$. Consider the following two  finite possibilistic Kripke structures $M_{1}$ and  $M_{2}$, see Fig.2 and Fig.3. By a simple calculation, we have $Po(s_{0}\models\lozenge a)=P(s_{0}s_{1}s_{3}^{w})=1$ in $M_{1}$. However, $Po(s_{0}\models\lozenge a)=Po(s_{0}s_{1}s_{3}^{w})=0.5$ in $M_{2}$. State $s_{0}$ satisfies $Po_{=1}(\lozenge a)$ in $M_{1}$, while $s_{0}$ does not satisfy $Po_{=1}(\lozenge a)$ in $M_{2}$. Hence,
$s_{0}\in Sat_{M_{1}}(Po_{=1}(\lozenge a))$, but $s_{0} \notin Sat_{M_{2}}(Po_{=1}(\lozenge a))$.
This implies that
\begin{eqnarray}\label{eq:not}
Sat_{M_{1}}(Po_{=1}(\lozenge a))\not=Sat_{M_{2}}(Po_{=1}(\lozenge a)).
\end{eqnarray}
Since $\Phi$ is a CTL state formulae, and  $TS(M_{1})=TS(M_{2})$, we have
 \begin{eqnarray}\label{eq:1-2}
Sat_{TS(M_{1})}(\Phi)=Sat_{TS(M_{2})}(\Phi).
\end{eqnarray}
  By the assumption $\Phi\equiv Po_{=1}(\lozenge a)$, it follows that $Sat_{TS(M)}(\Phi)=Sat_{M}(Po_{=1}(\lozenge a))$ for any finite possibilistic Kripke structure $M$. Then we have
 \begin{eqnarray}\label{eq:equal}
Sat_{M_{1}}(Po_{=1}(\lozenge a))=Sat_{M_{2}}(Po_{=1}(\lozenge a)).
\end{eqnarray} Eq.\ref{eq:not} and Eq.\ref{eq:equal} shows a contradiction, which proves that there is no CTL formula that is equivalent to $Po_{=1}(\lozenge a)$.
\end{proof}

\begin{figure}[ht]
\begin{center}
\includegraphics[scale=0.5]{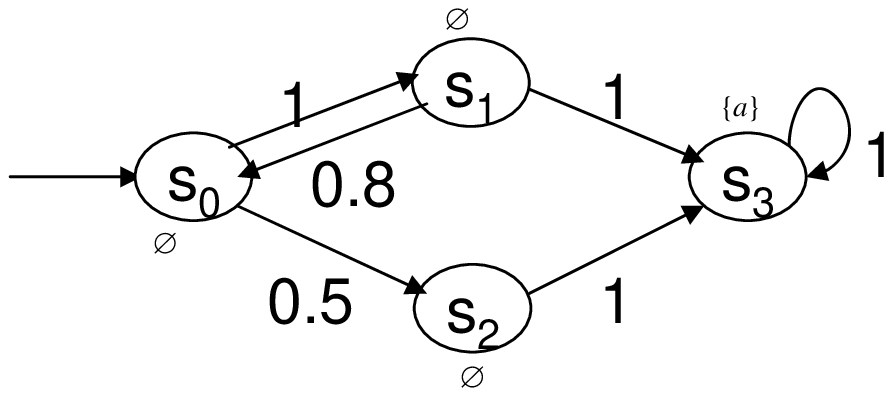}
\center{Fig.2.}A finite possibilistic Kripke structure $M_1$.
\vspace{-0.3cm}
\end{center}
\end{figure}

\begin{figure}[ht]
\begin{center}
\includegraphics[scale=0.5]{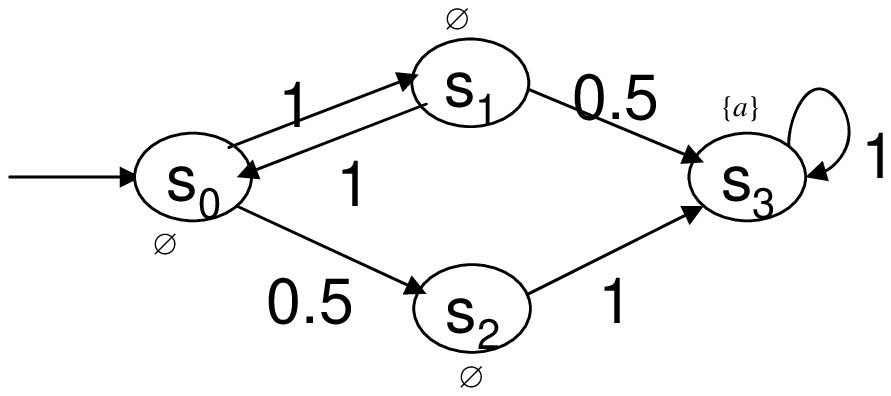}
\center{Fig.3.}A finite possibilistic Kripke structure $M_2$.
\vspace{-0.3cm}
\end{center}
\end{figure}

Combining Theorem \ref{th:forall} and Theorem \ref{th:proper}, it follows that CTL is a proper subclass of PoCTL. PoCTL is completely different from probabilistic CTL. In fact, probabilistic CTL and CTL can not be comparable with each other (whereas, for finite probabilistic Kripke structure, the qualitative fragment of probabilistic CTL can be embedded into CTL and thus a proper subclass of PoCTL).

Using similar arguments, we can show that the following theorems also hold in finite possibilistic Kripke structures.

\begin{theorem}
There is no CTL formula that is equivalent to $Po_{=1}(\bigcirc a)$.
\end{theorem}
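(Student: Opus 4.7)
The plan is to adapt the strategy of Theorem~\ref{th:proper}: construct two finite possibilistic Kripke structures $M_1$ and $M_2$ sharing the same underlying transition graph, i.e., $TS(M_1)=TS(M_2)$, but with different possibility assignments so that some state $s_0$ satisfies $Po_{=1}(\bigcirc a)$ in $M_1$ and fails it in $M_2$. If this is achieved, then for any CTL formula $\Phi$ we would have $Sat_{TS(M_1)}(\Phi)=Sat_{TS(M_2)}(\Phi)$ (the CTL semantics depends only on the underlying digraph), while $Sat_{M_1}(Po_{=1}(\bigcirc a))\neq Sat_{M_2}(Po_{=1}(\bigcirc a))$; this precludes the existence of a CTL formula equivalent to $Po_{=1}(\bigcirc a)$.

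For the witnessing structures I would take three states $\{s_0,s_1,s_2\}$ with $AP=\{a\}$, $L(s_1)=\{a\}$, $L(s_0)=L(s_2)=\emptyset$, the state $s_0$ as the unique initial state, and transitions $s_0\to s_1$, $s_0\to s_2$ together with self-loops at $s_1$ and $s_2$. In $M_1$ all the edges carry possibility $1$, whereas in $M_2$ I would set $P(s_0,s_1)=0.5$ and keep $P(s_0,s_2)=P(s_1,s_1)=P(s_2,s_2)=1$. A direct computation from Definition~\ref{def:possibility measure} then gives
\begin{equation*}
Po^{M_1}(s_0\models\bigcirc a)=1,\qquad Po^{M_2}(s_0\models\bigcirc a)=0.5,
\end{equation*}
because the unique path in either model with $\pi[1]\models a$ is $s_0 s_1^{\omega}$, and its possibility equals the bottleneck $P(s_0,s_1)$ by the max--min semantics. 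Thus $s_0\in Sat_{M_1}(Po_{=1}(\bigcirc a))$ but $s_0\notin Sat_{M_2}(Po_{=1}(\bigcirc a))$.

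The only real subtlety is choosing the two distributions carefully so that (i) the induced transition graphs coincide and (ii) the normalization axiom $\bigvee_{t\in S}P(s,t)=1$ of Definition~\ref{de:pkripke} holds at every state in both models. The self-loop on $s_2$ with full possibility is what guarantees normalization at $s_0$ in $M_2$ despite $P(s_0,s_1)<1$, and the self-loops at $s_1$ and $s_2$ ensure normalization there as well; since both $P(s_0,s_1)$ and $P(s_0,s_2)$ remain strictly positive in $M_2$, the two models induce identical transition systems. The remainder of the argument then parallels Theorem~\ref{th:proper}: assuming $\Phi\equiv Po_{=1}(\bigcirc a)$ for some CTL formula $\Phi$ would force
\begin{equation*}
Sat_{M_1}(Po_{=1}(\bigcirc a))=Sat_{TS(M_1)}(\Phi)=Sat_{TS(M_2)}(\Phi)=Sat_{M_2}(Po_{=1}(\bigcirc a)),
\end{equation*}
contradicting the explicit computation above.
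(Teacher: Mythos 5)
Your proposal is correct and follows exactly the argument the paper intends: the paper proves this theorem by appeal to ``similar arguments'' to Theorem~\ref{th:proper}, namely exhibiting two finite possibilistic Kripke structures with identical underlying transition systems but different possibility values, so that $Po_{=1}(\bigcirc a)$ distinguishes them while no CTL formula can. Your explicit three-state construction, the normalization check, and the computation $Po^{M_1}(s_0\models\bigcirc a)=1$ versus $Po^{M_2}(s_0\models\bigcirc a)=0.5$ are all sound.
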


\begin{theorem}
There is no CTL formula that is equivalent to $Po_{=1}(\square a)$.
\end{theorem}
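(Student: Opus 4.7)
My plan is to imitate the template of Theorem \ref{th:proper} verbatim: build two finite possibilistic Kripke structures $M_1$ and $M_2$ that share the same underlying transition system $TS(M_1)=TS(M_2)$ but disagree on whether $s_0$ satisfies $Po_{=1}(\square a)$. Any hypothetical CTL equivalent $\Phi$ depends only on $TS(M)$, so $Sat_{TS(M_1)}(\Phi)=Sat_{TS(M_2)}(\Phi)$, which together with $\Phi\equiv Po_{=1}(\square a)$ would force $Sat_{M_1}(Po_{=1}(\square a))=Sat_{M_2}(Po_{=1}(\square a))$, contradicting the intended disagreement at $s_0$.

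First I would unwind the semantics: by Definition \ref{def:semantics of PoCTL} together with Eq.~\eqref{eq:possibility measure-path}, $Po(s_0\models\square a)=1$ iff there is an infinite path $\pi=s_0s_1s_2\cdots$ with every $s_i\models a$ and $I(s_0)\wedge\bigwedge_i P(s_i,s_{i+1})=1$; equivalently, an infinite ``possibility-one'' $a$-path issuing from $s_0$ (using $M_{s_0}$ so that $I(s_0)=1$). The construction should therefore preserve such a path in $M_1$ while breaking it in $M_2$ merely by lowering the possibility of one key transition, without altering the zero/positive skeleton.

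A concrete candidate: take $S=\{s_0,s_1,s_2\}$ with $L(s_0)=L(s_1)=\{a\}$, $L(s_2)=\varnothing$, $I(s_0)=1$, and edges $s_0\to s_1$, $s_1\to s_1$, $s_1\to s_2$, $s_2\to s_2$. In $M_1$ every edge carries possibility $1$; in $M_2$ the self-loop $s_1\to s_1$ is weakened to possibility $\tfrac{1}{2}$ while $s_0\to s_1$, $s_1\to s_2$ and $s_2\to s_2$ retain possibility $1$. Both structures are normalized because at every state the maximum outgoing possibility is $1$, and their underlying transition systems coincide. The only candidate $\square a$-path from $s_0$ is $s_0 s_1^{\omega}$, since any detour through $s_2$ violates $a$; its possibility is $1$ in $M_1$ and $\tfrac{1}{2}$ in $M_2$. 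Hence $s_0\in Sat_{M_1}(Po_{=1}(\square a))$ but $s_0\notin Sat_{M_2}(Po_{=1}(\square a))$, and the contradiction closes as in Theorem \ref{th:proper}.

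The only real obstacle is the bookkeeping of the normalization condition $\bigvee_{s'}P(s,s')=1$, which is precisely why the auxiliary state $s_2$ with a full-weight edge from $s_1$ is introduced in $M_2$; once that is in place, the semantic observation that paths through $s_2$ contribute $0$ to $Po(\square a)$ because $s_2\not\models a$ is immediate, so no supremum calculation beyond the single path $s_0s_1^{\omega}$ is needed.
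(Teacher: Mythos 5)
Your proposal is correct and follows exactly the paper's intended route: the paper proves this theorem by the remark ``using similar arguments'' to Theorem \ref{th:proper}, i.e., exhibiting two finite possibilistic Kripke structures with identical underlying transition systems that disagree on $Po_{=1}(\square a)$ at $s_0$, which is precisely what you do. Your concrete pair $M_1$, $M_2$ is properly normalized, the only $\square a$-path from $s_0$ is indeed $s_0s_1^{\omega}$ with possibility $1$ versus $\tfrac{1}{2}$, and the contradiction closes as in the paper.
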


\begin{theorem}
There is no CTL formula that is equivalent to $Po_{=1}(a \sqcup b)$.
\end{theorem}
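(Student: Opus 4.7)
The plan is to imitate the construction used in the proof of Theorem~\ref{th:proper}, now targeting the formula $Po_{=1}(a\sqcup b)$. I would build two finite possibilistic Kripke structures $M_1$ and $M_2$ that agree on their underlying topology and labeling (so that $TS(M_1)=TS(M_2)$) but differ in their possibility weights, so that $s_0$ satisfies $Po_{=1}(a\sqcup b)$ in one model and fails it in the other. Any hypothetical equivalent CTL formula $\Phi$ would then give $Sat_{TS(M_1)}(\Phi)=Sat_{TS(M_2)}(\Phi)$ while $Sat_{M_1}(Po_{=1}(a\sqcup b))\neq Sat_{M_2}(Po_{=1}(a\sqcup b))$, yielding the same kind of contradiction as in Theorem~\ref{th:proper}.

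Concretely, I would take four states $\{s_0,s_1,s_2,s_3\}$ with $L(s_0)=L(s_1)=\{a\}$, $L(s_3)=\{b\}$, $L(s_2)=\varnothing$, transitions $s_0\to s_1$, $s_0\to s_2$, $s_1\to s_3$, $s_2\to s_2$, $s_3\to s_3$, and $s_0$ as unique initial state. In $M_1$, assign possibility $1$ to every existing transition; then the path $s_0 s_1 s_3^{\omega}$ witnesses $a\sqcup b$ (take $k=2$) with possibility $1$, giving $Po^{M_1}(s_0\models a\sqcup b)=1$. In $M_2$, keep exactly the same support (so that $TS(M_2)=TS(M_1)$) but weaken $P(s_0,s_1)$ to $0.5$, leaving $P(s_0,s_2)=1$ to preserve the normalization condition $\bigvee_{s'}P(s_0,s')=1$. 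Then the only path out of $s_0$ that satisfies $a\sqcup b$ is still $s_0 s_1 s_3^{\omega}$, now with possibility $0.5$, so $Po^{M_2}(s_0\models a\sqcup b)=0.5<1$.

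With these numbers in hand, the rest is essentially bookkeeping: I would check that both models are normalized possibilistic Kripke structures, that their induced transition systems and labelings coincide, and that the path $s_0 s_2^{\omega}$ fails $a\sqcup b$ because $s_0\not\models b$ and $s_2\not\models b$, so no prefix ever reaches a $b$-state. The main mildly subtle step is the last one, since one must argue that no other path from $s_0$ contributes a larger possibility to the sup in Definition~\ref{def:possibility measure} for the event $\{\pi\mid\pi\models a\sqcup b\}$ in $M_2$; in our model only $s_0 s_1 s_3^{\omega}$ qualifies, so the supremum really is $0.5$. Once that is confirmed, the contradiction argument from Theorem~\ref{th:proper} transfers verbatim, and the theorem follows.
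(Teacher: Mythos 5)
Your proposal is correct and follows essentially the same route the paper intends: the paper proves the $Po_{=1}(\lozenge a)$ case (Theorem~\ref{th:proper}) via two weight-variants $M_1$, $M_2$ of one underlying transition system and then states that the until case follows "using similar arguments," which is exactly your construction. Your concrete models check out (both are normalized, $TS(M_1)=TS(M_2)$, the only path satisfying $a\sqcup b$ from $s_0$ is $s_0s_1s_3^{\omega}$, giving possibilities $1$ and $0.5$ respectively), so the contradiction argument transfers as claimed.
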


\subsection{ Properties of  repeated reachability and persistence }

 This subsection will show that  qualitative properties for events such as repeated reachability - a certain set of states being visited repeated, and persistence - only a  certain set of states being visited from the moment on, can be described by PoCTL formulae. And we will show that some properties that can not be expressed in CTL can be expressed in the qualitative fragment of PoCTL.

For CTL, universal repeated reachability properties (\cite{Baier08}) can be formalized by the combination of the modalities $\forall \square$ and $\forall\lozenge$:
$$s\models \forall\square\forall\lozenge a\ {\rm iff}\ \pi\models \square\lozenge a \ {\rm for\ all}\ \pi\in Paths(s).$$ For finite possibilistic Kripke structures, a similar result holds for the qualitative fragment of PoCTL.

\begin{theorem}\label{th:repeated reachability}
 Let $M$ be a  finite possibilistic Kripke structure, and  $s$ a state of $M$. Then, we have
\begin{center}
$s\models Po_{=1}(\square Po_{=1}(\lozenge a))$ iff $Po(s\models\square\lozenge a)=1$.
\end{center}
\end{theorem}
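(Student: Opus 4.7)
The statement is an ``iff'' and I handle the two directions separately; the first is short, while the second is where the real work lies.

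For $(\Leftarrow)$, assume $Po(s\models\square\lozenge a)=1$. Since $M$ is finite, each value $Po^{M_s}(\pi)$ is the infimum of a family drawn from the finite range of $P$, so the supremum defining $Po$ is attained: we obtain a path $\pi=s_0s_1s_2\cdots$ from $s$ with $Po^{M_s}(\pi)=1$ (equivalently, $P(s_i,s_{i+1})=1$ for all $i$) which visits $A:=\{t\in S:t\models a\}$ at infinitely many positions. For each $j\ge 0$ the tail $s_js_{j+1}\cdots$ is a path in $M_{s_j}$ of possibility $1$ which visits $A$ eventually (pick any $A$-index that is at least $j$), so $Po(s_j\models\lozenge a)=1$ and hence $s_j\models Po_{=1}(\lozenge a)$. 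Therefore $\pi\models\square Po_{=1}(\lozenge a)$, and the witness $\pi$ with $Po^{M_s}(\pi)=1$ yields $Po(s\models\square Po_{=1}(\lozenge a))=1$, i.e.\ $s\models Po_{=1}(\square Po_{=1}(\lozenge a))$.

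For $(\Rightarrow)$, assume $s\models Po_{=1}(\square Po_{=1}(\lozenge a))$, and again by finiteness pick a $1$-path $\pi=s_0s_1\cdots$ from $s$ with every $s_j\in R:=Sat(Po_{=1}(\lozenge a))$. The task is to produce a (possibly different) $1$-path from $s$ visiting $A$ at infinitely many positions. Working in the digraph $G$ on $S$ with edge set $\{(u,v):P(u,v)=1\}$, the path $\pi$ is an infinite walk in $G$; by pigeonhole its ``infinitely-often'' vertex set $C$ is an SCC of $G$ contained in $R$, carrying a $1$-cycle through any state $t$ it visits infinitely often. Moreover each $u\in C\subseteq R$ is equipped with a finite $1$-path $u\to_1\cdots\to_1 v$ ending in some $v\in A$, and every intermediate vertex also lies in $R$ because it itself $1$-reaches $v$.

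The construction I would attempt is a splice: follow $\pi$ until it enters $C$, then cycle $t\to_1\cdots\to_1 v\to_1\cdots\to_1 t$ inside $C$ forever, giving a path of possibility $1$ that visits $A$ infinitely often. The main obstacle, and where I expect the bulk of the argument to sit, is closing this loop: one must show that an $a$-state $v$ can actually be chosen inside the same SCC as $t$, so that the return leg $v\to_1\cdots\to_1 t$ exists. I would pursue this via a bottom-SCC analysis of the restriction of $G$ to $R$, leveraging the normality condition $\bigvee_{v}P(u,v)=1$ (every state has at least one $1$-successor) together with the defining property of $R$ to argue that every sink SCC of $G|_R$ reachable by $\pi$ must intersect $A$. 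Finiteness of $M$ is essential throughout: it makes each supremum attained, lets us pass to SCCs, and lets the $1$-paths-to-$A$ witnessing $u\in R$ be chosen of finite length.
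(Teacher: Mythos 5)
Your $(\Leftarrow)$ direction is complete and is essentially the paper's own argument for that half: by finiteness the supremum defining $Po(s\models\square\lozenge a)$ is attained, so one extracts a possibility-one path satisfying $\square\lozenge a$, observes that each of its tails again has possibility one and satisfies $\lozenge a$, and concludes that every state on the path satisfies $Po_{=1}(\lozenge a)$.

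The $(\Rightarrow)$ direction, however, is only a sketch, and the step you yourself flag as ``the main obstacle'' is a genuine, and in fact unfillable, gap: you never prove that the $a$-state $v$ witnessing $Po(u\models\lozenge a)=1$ for $u\in C$ can be chosen so that a return leg $v\to_1\cdots\to_1 t$ exists, and no sink-SCC analysis will supply this, because the witnessing path is free to cycle forever in a \emph{non}-sink class of $G|_R$. Concretely, take $S=\{u,v,w\}$ with $L(v)=\{a\}$, $L(u)=L(w)=\emptyset$, and the only positive transition possibilities $P(u,u)=P(u,v)=P(v,w)=P(w,w)=1$ (normality holds at every state). Then $Po(u\models\lozenge a)=1$ via $uvw^{\omega}$, and $Po(v\models\lozenge a)=1$ vacuously since $v$ itself satisfies $a$ at position $0$; hence the path $u^{\omega}$ has possibility $1$ and all of its states satisfy $Po_{=1}(\lozenge a)$, so $u\models Po_{=1}(\square Po_{=1}(\lozenge a))$. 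Yet every path that ever reaches $v$ is trapped in $w$ afterwards, so no path visits $a$ infinitely often and $Po(u\models\square\lozenge a)=0$. The root of the trouble is that membership in $R=Sat(Po_{=1}(\lozenge a))$ can hold for trivial reasons at an $a$-state and gives no handle on reaching $a$ a \emph{second} time. For comparison, the paper's own proof of this direction buries the same issue in the unjustified sentence that any $\pi\models\square Po_{=1}(\lozenge a)$ ``follows'' to satisfy $\square\lozenge a$, which the example above refutes; so you correctly located where the real work sits, but as submitted your argument does not close it, and the forward implication cannot be established in the generality claimed.
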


\begin{proof}
Since $s\models Po_{=1}(\square Po_{=1}(\lozenge a))$ if and only if $Po(s\models\square Po_{=1}(\lozenge a))=1$,
and $s\models\square Po_{=1}(\lozenge a)$ iff $\pi \models \square Po_{=1}(\lozenge a)$ for any $\pi\in Paths(s)$ ,
 it follows that $Po(s\models\square Po_{=1}(\lozenge a))= Po^{M_s}(\{\pi\in Paths(s)\mid\pi \models\square Po_{=1}(\lozenge a)\})=1$. For any $\pi\models \square Po_{=1}(\lozenge a)$, let $\pi=s_{0}s_{1}\cdots s_{n}\cdots$, then $Po(s_{i}\models \lozenge a)=1$ for any $s_{i}$, where
  $i\geq 0 $. It follows that $\pi\models \square\lozenge a$. Noting that $Po^{M_s}(\pi)\leq Po^{M_s}(\{\pi^{\prime}\in
  Paths(s)\mid\pi^{\prime}\models\square\lozenge a\}$, and thus,
  \begin{center}
   $Po(s\models\square\lozenge a)=Po^{M_s}(\{\pi\in Paths(s)\mid\pi\models\square\lozenge a\})=1$.
   \end{center}

Assume that $Po(s\models\square\lozenge a)=1$. As $Po(s\models\square\lozenge a)=Po^{M_s}(\{\pi\in Paths(s)\mid\pi\models\square\lozenge a\})$ and $M$ is finite, there exists a path $\pi\models\square\lozenge a$ satisfying $Po^{M_s}(\pi)=1$. Let $\pi=s_{0}s_{1}s_{2}\cdots$. Since $\pi\models\square\lozenge a$, we have $\pi[j\cdots]\models\lozenge a$ for any $j\geq 0$, where $\pi[j\cdots]=s_js_{j+1}\cdots$. As $Po^{M_s}(\pi[j\cdots])\geq Po^{M_s}(\pi)$ and $Po^{M_s}(\pi)=1$, it follows that $Po^{M_s}(\pi[j\cdots])=1$ for any $j\geq 0$. Note that $Po^{M_s}(\pi[j\cdots])\leq Po(s_j\models \lozenge a)$, we have $Po(s_j\models \lozenge a)=1$ for any $j\geq 0$. Therefore, we have $Po(s_{0}\models Po_{=1}(\lozenge a))=1$. Hence, $s\models Po_{=1}(\square Po_{=1}(\lozenge a))$.

According to the  above proof, we have:
\begin{center}
$s\models Po_{=1}(\square Po_{=1}(\lozenge a))$ iff $Po(s\models\square\lozenge a)=1$.
\end{center}\end{proof}

In a similar way, by the analysis of the possibility of the  evens such as repeated reachability and persistence with more than $0$ and  equal to $1$, we can show that the following theorems hold in finite possibilistic Kripke structures for atomic events.

\begin{theorem}\label{th:exists repeated reachability}
Let $M$ be a finite possibilistic Kripke structure, and  $s$ a state of $M$. Then, we have
\begin{center}
$s\models Po_{>0}(\square Po_{>0}(\lozenge a))$ iff $Po(s\models\square\lozenge a)>0$.
\end{center}
\end{theorem}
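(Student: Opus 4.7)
My plan is to follow the template of the proof of Theorem \ref{th:repeated reachability} verbatim, replacing the threshold $=1$ throughout by $>0$. The key enabling fact transfers directly: since $M$ is finite, the supremum in Eq.(\ref{eq:possibility measure}) is always attained on a single path, so ``$Po(\cdot)>0$'' on a set of paths means that some concrete path in that set has positive possibility.

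For the direction $(\Leftarrow)$, starting from $Po(s\models\square\lozenge a)>0$, I would select an attaining path $\pi=s_0s_1\cdots\in Paths(s)$ with $\pi\models\square\lozenge a$ and $Po^{M_s}(\pi)>0$. For every $j\geq 0$, the suffix satisfies $\pi[j\cdots]\models\lozenge a$, and $Po^{M_{s_j}}(\pi[j\cdots])=\bigwedge_{i\geq j}P(s_i,s_{i+1})\geq Po^{M_s}(\pi)>0$. Hence $Po(s_j\models\lozenge a)\geq Po^{M_{s_j}}(\pi[j\cdots])>0$, so $s_j\models Po_{>0}(\lozenge a)$ for every $j$. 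Thus $\pi\models\square Po_{>0}(\lozenge a)$, and $Po(s\models\square Po_{>0}(\lozenge a))\geq Po^{M_s}(\pi)>0$, yielding $s\models Po_{>0}(\square Po_{>0}(\lozenge a))$.

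For the direction $(\Rightarrow)$, assume $s\models Po_{>0}(\square Po_{>0}(\lozenge a))$ and pick an attaining path $\pi=s_0s_1\cdots$ from $s$ with $Po^{M_s}(\pi)>0$ along which every $s_i$ satisfies $Po(s_i\models\lozenge a)>0$. Mirroring the proof of Theorem \ref{th:repeated reachability}, one tries to conclude that $\pi$ itself meets $\square\lozenge a$, which would give $Po(s\models\square\lozenge a)\geq Po^{M_s}(\pi)>0$. The main obstacle is precisely this step: $Po(s_i\models\lozenge a)>0$ only furnishes \emph{some} positive-possibility path from $s_i$ to an $a$-labeled state, not that the continuation $\pi[i\cdots]$ itself reaches $a$. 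To close the gap I would exploit the finiteness of $S$: by pigeonhole some state $t$ recurs along $\pi$ at positions $i_1<i_2<\cdots$, yielding a positive-possibility cycle $t\to\cdots\to t$; combined with a finite positive-possibility witness $t\to\cdots\to u$ with $a\in L(u)$ supplied by $Po(t\models\lozenge a)>0$, one splices the cycle and the witness to build a fresh infinite path from $s$ revisiting $a$ infinitely often with positive possibility. The delicate combinatorial point is extending this spliced path past $u$ while staying inside $R=Sat(Po_{>0}(\lozenge a))$, and handling this carefully is where the real content of the forward direction lies.
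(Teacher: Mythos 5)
Your $(\Leftarrow)$ direction is correct: it is the paper's own argument for Theorem \ref{th:repeated reachability} transposed from the threshold $=1$ to $>0$, and every step survives the transposition (finiteness of $M$ gives an attaining path $\pi\models\square\lozenge a$ with $Po^{M_s}(\pi)>0$, and each suffix then witnesses $Po(s_j\models\lozenge a)>0$). The trouble is entirely in the $(\Rightarrow)$ direction, and you have put your finger on exactly the right spot: knowing that every state $s_i$ on a positive-possibility path $\pi$ satisfies $Po(s_i\models\lozenge a)>0$ does not make $\pi$, or any path obtained from it by splicing, satisfy $\square\lozenge a$. Note that the paper offers no argument for this theorem beyond ``in a similar way,'' and its proof of the $=1$ analogue commits precisely this non-sequitur (``It follows that $\pi\models\square\lozenge a$''), so there is no template to lean on here.

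Moreover, the pigeonhole-and-splice repair you sketch cannot be completed, because the forward implication is false as stated. Take $S=\{s,u,v\}$ with $a\in L(u)$ and $a\notin L(s), L(v)$, and let $P(s,s)=P(s,u)=P(u,v)=P(v,v)=1$ with all other entries $0$; every row has supremum $1$, so this is a legitimate finite possibilistic Kripke structure. The paths from $s$ are exactly $s^{\omega}$ and $s^{k}uv^{\omega}$ for $k\geq 1$. Since $Po(s\models\lozenge a)=Po(u\models\lozenge a)=1$ and $Po(v\models\lozenge a)=0$, we have $Sat(Po_{>0}(\lozenge a))=\{s,u\}$; the path $s^{\omega}$ never leaves this set and has possibility $1$, so $s\models Po_{>0}(\square Po_{>0}(\lozenge a))$. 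Yet no path from $s$ visits $a$ infinitely often ($s^{\omega}$ never does, and $s^{k}uv^{\omega}$ does so exactly once), so $Po(s\models\square\lozenge a)=0$. The obstruction is exactly the ``delicate combinatorial point'' you flagged: after splicing in the witness $s\rightarrow u$ one is forced into $v\notin Sat(Po_{>0}(\lozenge a))$ and can never reach $a$ again. (The same structure also refutes the $=1$ version, Theorem \ref{th:repeated reachability}.) So only the $(\Leftarrow)$ half of the stated equivalence is provable; establishing the converse would require replacing $Po_{>0}(\lozenge a)$ by something that guarantees a positive-possibility cycle through $Sat(a)$, not merely reachability of $a$.
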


Recall that universal persistence properties can not be expressed in CTL (\cite{Baier08}). For finite possibilistic Kripke structures, PoCTL allows specifying persistence properties with possibility $1$. This is stated by the following theorem.

\begin{theorem}\label{repeated persistence}
Let $M$ be a finite possibilistic Kripke structure,  and $s$ a state of $M$. Then, we have
\begin{center}
$s\models Po_{=1}(\lozenge Po_{=1}(\square a))$ iff $Po(s\models\lozenge\square a)=1$.
\end{center}
\end{theorem}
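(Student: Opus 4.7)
The plan is to mirror the argument for Theorem~\ref{th:repeated reachability}, but swap the inner/outer roles of $\lozenge$ and $\square$. The only structural tool I need beyond the definition of $Po^{M_s}$ on a single path is the following finiteness observation, already used implicitly in Theorem~\ref{th:repeated reachability}: because $M$ is finite, the possibility of a path $\pi=s_0s_1\cdots$ takes values in the finite set generated by $\{I(s)\}\cup\{P(s,t)\}$ under $\wedge$, so whenever $Po^{M_s}(E)=1$ the supremum is attained by some $\pi\in E$ with $Po^{M_s}(\pi)=1$; equivalently, $Po^{M_s}(\pi)=1$ iff $I(s_0)=1$ and $P(s_i,s_{i+1})=1$ for every $i$.

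For the forward direction, I would assume $s\models Po_{=1}(\lozenge Po_{=1}(\square a))$, so $Po(s\models\lozenge Po_{=1}(\square a))=1$. By the observation above, choose a path $\pi=s_0s_1\cdots\in Paths(s)$ with $Po^{M_s}(\pi)=1$ and $\pi\models\lozenge Po_{=1}(\square a)$, and pick $k\geq 0$ with $s_k\models Po_{=1}(\square a)$, i.e.\ $Po(s_k\models\square a)=1$. Apply the same fact at $s_k$ to obtain a path $\pi'=s_k t_1t_2\cdots\in Paths(s_k)$ with $Po^{M_{s_k}}(\pi')=1$ and $\pi'\models\square a$. I would then splice them into $\hat\pi=s_0s_1\cdots s_k t_1 t_2\cdots$. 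Every transition of $\hat\pi$ comes either from $\pi$ (for indices $<k$) or from $\pi'$ (for indices $\geq k$), hence has possibility $1$, and $I(s_0)=1$; so $Po^{M_s}(\hat\pi)=1$. Since $\hat\pi[k\cdots]\models\square a$, we have $\hat\pi\models\lozenge\square a$, and therefore $Po(s\models\lozenge\square a)\geq Po^{M_s}(\hat\pi)=1$.

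For the converse, I would assume $Po(s\models\lozenge\square a)=1$, extract via finiteness a path $\pi=s_0s_1\cdots$ from $s$ with $Po^{M_s}(\pi)=1$ and $\pi\models\lozenge\square a$, and pick $k\geq 0$ such that $s_j\models a$ for all $j\geq k$. The suffix $\pi[k\cdots]\in Paths(s_k)$ then satisfies $\square a$, and its possibility in $M_{s_k}$ is $1$ because each of its transitions inherits possibility $1$ from $\pi$. Hence $Po(s_k\models\square a)\geq Po^{M_{s_k}}(\pi[k\cdots])=1$, so $s_k\models Po_{=1}(\square a)$, which gives $\pi\models\lozenge Po_{=1}(\square a)$. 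Together with $Po^{M_s}(\pi)=1$ this yields $Po(s\models\lozenge Po_{=1}(\square a))=1$, i.e.\ $s\models Po_{=1}(\lozenge Po_{=1}(\square a))$.

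The main obstacle is the splicing step in the forward direction: I must be careful that gluing $\pi'$ onto the prefix of $\pi$ at $s_k$ produces an admissible path of $M$ whose overall possibility is still $1$. This rests on the observation that a path of possibility $1$ is precisely one in which every transition has possibility $1$, which in turn is where finiteness of $M$ is essential (otherwise a supremum equal to $1$ need not be attained, as the counterexample following Proposition~\ref{pro:forall} already illustrates). Once that splicing lemma is in hand, both directions reduce to manipulations of $\wedge$ and the definitions in Definition~\ref{def:possibility measure} and Definition~\ref{def:semantics of PoCTL}.
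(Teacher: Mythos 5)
Your proof is correct, and it supplies exactly what the paper leaves implicit: for Theorem~\ref{repeated persistence} the paper offers no argument beyond the remark that it follows ``in a similar way'' to Theorem~\ref{th:repeated reachability}. Your converse direction is indeed a faithful transposition of that template: use finiteness to extract a path $\pi\models\lozenge\square a$ with $Po^{M_s}(\pi)=1$, note that every suffix inherits possibility $1$, and conclude that the state $s_k$ from which $\square a$ holds satisfies $Po_{=1}(\square a)$. The forward direction is where you genuinely depart from (and improve on) the template. In Theorem~\ref{th:repeated reachability} the paper handles the forward direction via the pathwise implication $\pi\models\square Po_{=1}(\lozenge a)\Rightarrow\pi\models\square\lozenge a$, so that one event is contained in the other and the possibilities compare by monotonicity of $Po^{M_s}$. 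That device has no analogue for persistence: $\pi\models\lozenge Po_{=1}(\square a)$ does \emph{not} imply $\pi\models\lozenge\square a$, since the continuation of $\pi$ past the state $s_k$ witnessing $Po_{=1}(\square a)$ may leave $Sat(a)$. Your splicing construction --- gluing a possibility-$1$ witness of $\square a$ issued from $s_k$ onto the possibility-$1$ prefix of $\pi$ --- is the right repair, and the supporting observation that, in a finite $M$, a set of paths has possibility $1$ only if it contains a path all of whose transitions (and whose initial state) have possibility $1$ is precisely the lemma that makes both directions rigorous; it is also where finiteness enters, consistent with the counterexample the paper gives for infinite structures.
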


\begin{theorem}\label{th:exists repeated persistence}
 Let $M$ be a finite possibilistic Kripke structure,  and $s$ a state of $M$. Then, we have
\begin{center}
$s\models Po_{>0}(\lozenge Po_{>0}(\square a))$ iff $Po(s\models\lozenge\square a)>0$.
\end{center}
\end{theorem}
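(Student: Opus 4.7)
The plan is to mirror the proof of Theorem \ref{repeated persistence}, but replace every appeal to ``there is a witnessing path of possibility $1$'' with ``there is a witnessing path of positive possibility,'' exploiting the finiteness of $M$. Throughout I would use the shorthand that, in $M_s$, we have $I(s)=1$, so $Po^{M_s}(\pi)=\bigwedge_{i\geq 0} P(s_i,s_{i+1})$ for any $\pi=s_0s_1\cdots\in Paths(s)$, and the key arithmetic fact that on a suffix $\pi[j\cdots]$, $Po^{M_{s_j}}(\pi[j\cdots])\geq Po^{M_s}(\pi)$, because an infimum over a smaller index set is larger.

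For the direction $(\Leftarrow)$, assume $Po(s\models\lozenge\square a)>0$. Because $M$ is finite, the definition $Po(s\models\lozenge\square a)=\bigvee\{Po^{M_s}(\pi)\mid \pi\in Paths(s),\pi\models\lozenge\square a\}$ is attained on some concrete path $\pi=s_0s_1\cdots$ with $Po^{M_s}(\pi)>0$ (this is the standard extraction I would justify in one sentence, using that only finitely many strictly positive transition values occur in $M$). Pick $j\geq 0$ with $\pi[j\cdots]\models\square a$. Then $Po^{M_{s_j}}(\pi[j\cdots])\geq Po^{M_s}(\pi)>0$, so $Po(s_j\models\square a)>0$, i.e., $s_j\models Po_{>0}(\square a)$. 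Hence $\pi\models\lozenge Po_{>0}(\square a)$ and therefore $Po(s\models\lozenge Po_{>0}(\square a))\geq Po^{M_s}(\pi)>0$, giving $s\models Po_{>0}(\lozenge Po_{>0}(\square a))$.

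For the direction $(\Rightarrow)$, assume $s\models Po_{>0}(\lozenge Po_{>0}(\square a))$. Extract, as above, a path $\pi=s_0s_1\cdots\in Paths(s)$ with $Po^{M_s}(\pi)>0$ and some index $j\geq 0$ such that $\pi[j]=s_j\models Po_{>0}(\square a)$; the latter in turn yields a path $\sigma=t_0t_1\cdots\in Paths(s_j)$ (so $t_0=s_j$) with $\sigma\models\square a$ and $Po^{M_{s_j}}(\sigma)>0$. Now I would form the ``stitched'' path $\pi':=s_0s_1\cdots s_{j-1}\,t_0t_1t_2\cdots\in Paths(s)$, which clearly satisfies $\pi'\models\lozenge\square a$. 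The main obstacle — and really the only substantive point in the proof — is to verify $Po^{M_s}(\pi')>0$. Here I decompose
\[
Po^{M_s}(\pi')=\Bigl(\bigwedge_{0\leq i<j}P(s_i,s_{i+1})\Bigr)\wedge\Bigl(\bigwedge_{i\geq 0}P(t_i,t_{i+1})\Bigr),
\]
where the first meet is a minimum over finitely many values and each is positive (being a transition along $\pi$ with $Po^{M_s}(\pi)>0$), while the second meet equals $Po^{M_{s_j}}(\sigma)>0$. The minimum of two positive reals is positive, so $Po^{M_s}(\pi')>0$.

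Concluding, $Po(s\models\lozenge\square a)\geq Po^{M_s}(\pi')>0$. The extraction-of-a-positive-possibility-witness step in both directions is where the hypothesis that $M$ is finite is used, so I would make that lemma explicit (perhaps with a short remark that the positive transition values of $M$ take only finitely many distinct values, so an infimum of such positives is a minimum and hence positive). Taken together, the two directions yield the stated equivalence, and the structure is exactly parallel to Theorems \ref{th:repeated reachability}--\ref{repeated persistence}, just with the threshold $=1$ replaced by $>0$.
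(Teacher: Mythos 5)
Your proof is correct. The paper does not actually write out a proof of this theorem---it only asserts that it follows ``in a similar way'' from the detailed proof given for the repeated-reachability case (Theorem \ref{th:repeated reachability})---and your argument is the natural way to fill in those details: the backward direction is the easy suffix-monotonicity observation $Po^{M_{s_j}}(\pi[j\cdots])\geq Po^{M_s}(\pi)$, and the forward direction correctly identifies and handles the one genuinely nontrivial point, namely that $\pi\models\lozenge Po_{>0}(\square a)$ does not by itself give $\pi\models\lozenge\square a$, so a witnessing path must be stitched together from the prefix $s_0\cdots s_{j-1}$ and a positive-possibility $\square a$-path out of $s_j$, with positivity preserved because the prefix contributes only a finite minimum of positive transition values.
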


\subsection{Alternative way to define the equivalence between CTL and PoCTL formulae}

As mentioned in Remark \ref{re:equivalence of poctl and ctl}, the definition of the equivalence of PoCTL and CTL formulae is not unique. Definition \ref{def:equivalence of poctl and ctl} is an analogous version of the related definition of probabilistic CTL and CTL formulae. We will give another way to define the equivalence of PoCTL and CTL formulae in the following manner.

\begin{definition}\label{de:alternative equivalence of poctl and ctl}
For a finite possibilistic Kripke structure $M=(S,P,I,AP, L)$ and $\alpha\in (0,1]$, let $TS_{\alpha}(M)=(S,\rw_{\alpha},I_{\alpha},AP,L)$, where $s\rw_{\alpha}t$ iff $P(s,t)\geq \alpha$, and $s\in I_{\alpha}$ iff $I(s)\geq \alpha$. PoCTL formula $\Phi$ is $\alpha$-equivalent to CTL formula $\Psi$, denoted by $\Phi\equiv_{\alpha}\Psi$, if $Sat_M(\Phi)=Sat_{TS_{\alpha}(M)}(\Psi)$ for any finite possibilistic Kripke structure $M$.
\end{definition}

We shall give some properties of PoCTL using the definition of $\alpha$-equivalence of PoCTL and CTL formulae for $\alpha\in (0,1]$. The proofs are very similar to those in Section 3.2.

\begin{proposition}\label{pro:exists}
Let $\varphi$ be any CTL path formula and $\alpha\in (0,1]$. Then, we have
\begin{eqnarray}\label{eq:alternative exists}
\exists\varphi\equiv_{\alpha} Po_{\geq \alpha}(\varphi).
\end{eqnarray}
\end{proposition}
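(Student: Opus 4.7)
The plan is to establish the double containment $Sat_M(Po_{\geq \alpha}(\varphi)) = Sat_{TS_\alpha(M)}(\exists\varphi)$ for every finite possibilistic Kripke structure $M$ over $AP$, following the template of Theorem \ref{th:exists}, of which this statement is the quantitative refinement (the threshold $>0$ is replaced by $\geq\alpha$). Recall that in $TS_\alpha(M)$, $s\rw_\alpha t$ iff $P(s,t)\geq \alpha$, while the state set and the labelling coincide with those of $M$; so a path of $TS_\alpha(M)$ from $s$ is exactly a sequence $s_0s_1\cdots$ with $s_0=s$ and $P(s_i,s_{i+1})\geq\alpha$ for every $i$. Since the state-formula sub-components of a CTL path formula $\varphi$ are evaluated against the shared state set and labelling, the relation $\pi\models\varphi$ gives the same answer whether we read $\pi$ as a path of $M$ or of $TS_\alpha(M)$.

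For the containment $Sat_{TS_\alpha(M)}(\exists\varphi)\subseteq Sat_M(Po_{\geq\alpha}(\varphi))$ I would pick a path $\pi=s_0s_1\cdots\in Paths(s)$ witnessing $s\models\exists\varphi$ in $TS_\alpha(M)$. All of its one-step possibilities are at least $\alpha$, so by Definition \ref{def:possibility measure} applied to $M_s$ (where $I(s_0)=1$) we get $Po^{M_s}(\pi)=\bigwedge_{i\geq 0}P(s_i,s_{i+1})\geq\alpha$, and hence $Po(s\models\varphi)\geq\alpha$, i.e. $s\in Sat_M(Po_{\geq\alpha}(\varphi))$.

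For the reverse containment I would start with $s$ such that $Po(s\models\varphi)=\bigvee\{Po^{M_s}(\pi)\mid \pi\in Paths(s),\ \pi\models\varphi\}\geq\alpha$. This is the point at which finiteness of $M$ is essential: because $S$ is finite, the set $\{P(s,t)\mid s,t\in S\}$ of transition values is a finite subset of $[0,1]$, so the quantity $Po^{M_s}(\pi)=\bigwedge_i P(s_i,s_{i+1})$ takes only finitely many distinct values as $\pi$ varies. Hence the supremum above is actually attained; so pick a witness $\pi^\ast=s_0s_1\cdots$ with $s_0=s$, $\pi^\ast\models\varphi$ and $Po^{M_s}(\pi^\ast)\geq\alpha$. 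From $\bigwedge_i P(s_i,s_{i+1})\geq\alpha$ I conclude $P(s_i,s_{i+1})\geq\alpha$ for every $i$, so $\pi^\ast$ is already a path of $TS_\alpha(M)$, which witnesses $s\in Sat_{TS_\alpha(M)}(\exists\varphi)$.

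The only genuine obstacle is the attainment of the supremum in the last paragraph; the corresponding step in the proof of Theorem \ref{th:exists} needed only $Po^{M_s}(\pi)>0$, whereas here it must be promoted to $Po^{M_s}(\pi)\geq\alpha$. The finite-image argument above handles it, and at the same time clarifies why the restriction to finite possibilistic Kripke structures is indispensable, exactly as in the counterexample given after Proposition \ref{pro:forall}.
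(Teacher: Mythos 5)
Your proof is correct and follows exactly the route the paper intends: it omits the proof of this proposition, remarking only that the argument is ``very similar'' to that of Theorem~\ref{th:exists}, and your double containment with the threshold $>0$ promoted to $\geq\alpha$ is precisely that adaptation. Your extra observation that finiteness of $S$ makes $Po^{M_s}(\pi)$ range over a finite set, so the supremum is attained, is the right way to justify the one step that genuinely changes.
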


\begin{proposition}\label{pro:alternative equivalence}
For any CTL formula and $\alpha\in (0,1]$, there exists an $\alpha$-equivalent PoCTL formula.
\end{proposition}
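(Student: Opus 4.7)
The plan is to mimic the chain of reasoning behind Theorem~\ref{th:forall}. First, I would convert the given CTL formula into an equivalent ENF formula via Theorem~\ref{th:ENF}; this reduces the task to an induction over the ENF grammar, where the only nontrivial cases are $\exists\bigcirc\Phi$, $\exists(\Phi_1\sqcup\Phi_2)$ and $\exists\square\Phi$. The atoms, negations, and conjunctions carry through by a direct structural induction (using that $Sat_M(\neg\Phi)=S\setminus Sat_M(\Phi)$ and similarly for $TS_\alpha(M)$). For the three existential modalities, the candidate PoCTL translations are $Po_{\geq\alpha}(\bigcirc\Phi')$, $Po_{\geq\alpha}(\Phi_1'\sqcup\Phi_2')$, and $Po_{\geq\alpha}(\square\Phi')$ respectively, where primed formulae are obtained by recursively applying the induction hypothesis to substate formulae. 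Proposition~\ref{pro:exists} already contains the base equivalence that guides the inductive step.

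The crucial technical ingredient is a correspondence between paths in $TS_\alpha(M)$ and paths in $M$ of sufficient possibility. Since $Po^{M_s}(\pi)=\bigwedge_i P(s_i,s_{i+1})$ for $\pi=s_0s_1\cdots$ starting in $s$, a path $\pi$ lies in $Paths_{TS_\alpha(M)}(s)$ iff $P(s_i,s_{i+1})\geq\alpha$ for every $i$ iff $Po^{M_s}(\pi)\geq\alpha$. I would also note that in a finite possibilistic Kripke structure the normalization $\bigvee_{t}P(s,t)=1$ actually forces some $t$ with $P(s,t)=1$, so every state of $TS_\alpha(M)$ has an outgoing $\to_\alpha$-edge; this legitimizes applying the CTL semantics of Definition~\ref{def:semantics of CTL} to $TS_\alpha(M)$.

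With that in hand, the inductive steps are short. For $\exists\bigcirc\Phi$, one has $s\models_{TS_\alpha(M)}\exists\bigcirc\Phi$ iff there is $t$ with $P(s,t)\geq\alpha$ and $t\models_{TS_\alpha(M)}\Phi$, which by the induction hypothesis is $t\models_M\Phi'$; summing (with $\vee$) over such $t$ this is exactly $Po(s\models\bigcirc\Phi')\geq\alpha$, i.e., $s\models_M Po_{\geq\alpha}(\bigcirc\Phi')$. For $\exists(\Phi_1\sqcup\Phi_2)$, a witnessing $\alpha$-path in $TS_\alpha(M)$ has a finite prefix $s_0\cdots s_k$ with $s_i\models\Phi_1$ for $i<k$ and $s_k\models\Phi_2$; the same prefix, extended to an infinite path in $M$ whose tail uses possibility-$1$ transitions (available by normalization), realizes $Po^{M_s}\geq\alpha$. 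Conversely, any path in $M$ satisfying the until with possibility $\geq\alpha$ has all transitions on its until-prefix of possibility $\geq\alpha$, hence lives in $TS_\alpha(M)$. The argument for $\exists\square\Phi$ is parallel: an infinite $\alpha$-path through $Sat(\Phi')$ exists iff one of positive possibility at least $\alpha$ does, where the finite-state hypothesis lets us close such a path into a cycle.

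The main obstacle I anticipate is establishing that a supremum in $Po(s\models\varphi)$ of value $\geq\alpha$ is actually attained by some concrete path (otherwise the direction from $Po_{\geq\alpha}(\varphi')$ back to $\exists\varphi$ is not immediate). This attainment is the place where finiteness of $M$ is essential: the set of values $\{P(s,t):s,t\in S\}$ is finite, hence the values of $Po^{M_s}(\pi)=\bigwedge_i P(s_i,s_{i+1})$ form a finite subset of $[0,1]$, so $\bigvee$'s of them are maxima realized by some path. Once this attainment is established uniformly for $\bigcirc$, $\sqcup$, and $\square$, the three inductive cases close and the proposition follows by combining the ENF reduction with the inductive translation.
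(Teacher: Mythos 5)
Your proposal is correct and follows essentially the route the paper intends: the paper gives no explicit proof of Proposition~\ref{pro:alternative equivalence}, remarking only that the argument parallels Theorem~\ref{th:forall}, i.e.\ combine the ENF reduction (Theorem~\ref{th:ENF}) with $\exists\varphi\equiv_{\alpha}Po_{\geq\alpha}(\varphi)$. Your added observations --- that $\pi\in Paths_{TS_{\alpha}(M)}(s)$ iff $Po^{M_s}(\pi)\geq\alpha$, that normalization plus finiteness guarantees $TS_{\alpha}(M)$ has no terminal states, and that finiteness of $M$ makes the supremum in $Po(s\models\varphi)$ attained --- are exactly the details needed to make that sketch rigorous.
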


\begin{proposition}\label{pro:alternative exists 1}
For any CTL formulae $\Phi$ and $\Psi$, let $\alpha\in (0,1]$, we have

(1) $\exists\lozenge \Phi \equiv_{\alpha} Po_{\geq \alpha}(\lozenge \Phi)$,

(2) $\exists\bigcirc \Phi \equiv_{\alpha} Po_{\geq \alpha}(\bigcirc \Phi)$,

(3)  $\exists\square \Phi \equiv_{\alpha} Po_{\geq \alpha}(\square \Phi)$, and

(4) $\exists(\Phi \sqcup \Psi) \equiv_{\alpha} Po_{\geq \alpha}(\Phi\sqcup \Psi)$.
\end{proposition}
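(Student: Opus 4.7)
The plan is to derive each of the four equivalences as an immediate specialization of the preceding general result, $\exists\varphi\equiv_\alpha Po_{\geq\alpha}(\varphi)$ for an arbitrary CTL path formula $\varphi$, taking $\varphi$ in turn to be $\lozenge\Phi$ (equivalently $\mathit{true}\sqcup\Phi$), $\bigcirc\Phi$, $\square\Phi$, and $\Phi\sqcup\Psi$. So the real content lies in that general statement, and the present proposition records the quadruple of most useful instances.

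For a self-contained argument, fix a finite possibilistic Kripke structure $M=(S,P,I,AP,L)$, a state $s\in S$, and $\alpha\in(0,1]$. The crucial identity is
\[ Po^{M_s}(\pi)=\bigwedge_{i\geq 0} P(s_i,s_{i+1})\geq \alpha\iff P(s_i,s_{i+1})\geq\alpha\text{ for every }i\geq 0, \]
so $\{\pi\in Paths_M(s): Po^{M_s}(\pi)\geq\alpha\}$ coincides with the set of paths of $TS_\alpha(M)$ that start at $s$. Normality of $P$ combined with the finiteness of $S$ yields, at every state, some successor of possibility $1\geq\alpha$, so $TS_\alpha(M)$ has no terminal state and its CTL semantics is well defined.

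Next, because $S$ is finite the function $Po^{M_s}$ takes only finitely many values, so the supremum in $Po(s\models\varphi)=\bigvee\{Po^{M_s}(\pi):\pi\models\varphi\}$ is attained whenever it is positive. Hence $s\in Sat_M(Po_{\geq\alpha}(\varphi))$ iff there exists $\pi\in Paths_M(s)$ with $\pi\models\varphi$ and $Po^{M_s}(\pi)\geq\alpha$, which by the previous identity is the same as the existence of a path of $TS_\alpha(M)$ from $s$ satisfying $\varphi$, i.e., $s\in Sat_{TS_\alpha(M)}(\exists\varphi)$. Specializing to the four path formulae gives (1)--(4).

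The main obstacle, as in the qualitative counterparts of Section 3.2, is a concealed one: the path formula $\varphi$ involves state subformulae that on the left-hand side are evaluated as CTL formulae on $TS_\alpha(M)$, but on the right-hand side appear inside a PoCTL formula interpreted on $M$. To reconcile the two readings, one performs structural induction over the nesting of state and path operators, using Proposition \ref{pro:alternative equivalence} to replace each CTL subformula $\Phi$ (or $\Psi$) by an $\alpha$-equivalent PoCTL formula so that $Sat_M(\Phi)=Sat_{TS_\alpha(M)}(\Phi)$. With this identification the two readings of $\pi\models\varphi$ agree pointwise along the path, and the chain of equivalences closes.
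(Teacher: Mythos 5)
Your proposal is correct and follows exactly the route the paper intends: the paper states this proposition without an explicit proof, remarking only that the arguments mirror those of Theorem \ref{th:exists}, and your write-up is precisely that adaptation, specializing the general $\alpha$-equivalence $\exists\varphi\equiv_{\alpha}Po_{\geq\alpha}(\varphi)$ to the four path formulae. You also correctly supply the one ingredient that genuinely goes beyond the $>0$ case, namely that finiteness of $S$ makes the supremum defining $Po(s\models\varphi)$ attained, so $Po(s\models\varphi)\geq\alpha$ yields a single witnessing path with all transition possibilities at least $\alpha$, i.e., a path of $TS_{\alpha}(M)$.
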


\begin{proposition}\label{pro:alternative forall}
For any CTL formulae $\Phi$ and $\Psi$, let $\alpha\in (0,1]$, we have

(1) $\forall\bigcirc \Phi\equiv_{\alpha} Po_{<\alpha}(\bigcirc\neg\Phi)$,

(2) $\forall (\Phi\sqcup \Psi)\equiv_{\alpha} Po_{<\alpha}(\neg\Psi\sqcup (\neg\Phi\wedge\neg\Psi))\wedge Po_{<\alpha}(\square\neg\Psi)$,

(3)  $\forall\lozenge\Phi\equiv_{\alpha} Po_{<\alpha}(\square \neg\Phi)$, and

(4) $\forall\square\Phi\equiv_{\alpha} Po_{<\alpha}(\lozenge\neg\Phi)$.
\end{proposition}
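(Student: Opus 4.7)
The plan is to prove each part by the recipe used throughout Section 3.2: translate $Sat_M(Po_{<\alpha}(\cdot))$ into a combinatorial statement about weighted paths in $M$, then show that the same statement characterises $Sat_{TS_{\alpha}(M)}$ of the corresponding CTL formula. Two consequences of the finiteness of $M$ do most of the work. First, the set of transition weights $\{P(s,t):s,t\in S\}$ is finite, so every supremum of $Po$-values we encounter is attained and every infimum inside $Po^{M_s}(\pi)$ collapses to an ordinary minimum over a finite value set. Second, the normalisation $\bigvee_{t}P(s,t)=1$ together with finiteness gives, for every state, an outgoing edge of weight $1$; iterating extends any finite fragment to an infinite path whose possibility equals the minimum weight on the fragment. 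As a by-product, $TS_{\alpha}(M)$ has no terminal states for any $\alpha\in(0,1]$, so the CTL semantics of Definition 2.7 applies to it.

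For part (1), $Po(s\models\bigcirc\neg\Phi)$ simplifies to $\bigvee_{t\models\neg\Phi}P(s,t)$, which is $<\alpha$ iff every $t$ with $P(s,t)\geq\alpha$ satisfies $\Phi$, i.e., iff $s\models_{TS_{\alpha}(M)}\forall\bigcirc\Phi$. Part (4) applies the same idea to longer fragments: $Po(s\models\lozenge\neg\Phi)$ equals the supremum, over finite fragments $s_0\cdots s_n$ with $s_0=s$ and $s_n\models\neg\Phi$, of $P(s_0,s_1)\wedge\cdots\wedge P(s_{n-1},s_n)$, which is $<\alpha$ iff no $\neg\Phi$-state is reachable from $s$ in $TS_{\alpha}(M)$, matching $\forall\square\Phi$. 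For part (3), the key observation is that the supremum $Po(s\models\square\neg\Phi)=\sup\{\min_i P(s_i,s_{i+1}):\pi\models\square\neg\Phi\}$ is attained by an infinite path, and that path is an infinite $\neg\Phi$-run in $TS_{\alpha}(M)$ precisely when its value is $\geq\alpha$; hence $Po(s\models\square\neg\Phi)<\alpha$ iff every path from $s$ in $TS_{\alpha}(M)$ eventually meets a $\Phi$-state, which is $\forall\lozenge\Phi$.

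Part (2) is the composite case. The plan is to decompose $\forall(\Phi\sqcup\Psi)$ in $TS_{\alpha}(M)$ into two conditions: (i) every path from $s$ eventually reaches a $\Psi$-state; and (ii) along every path, no $\neg\Phi\wedge\neg\Psi$ state is visited while the path is still in the $\neg\Psi$-region. Condition (i) is the negation of the existence of an infinite $\neg\Psi$-path from $s$ in $TS_{\alpha}(M)$, handled exactly as in (3), and so equates to $Po_{<\alpha}(\square\neg\Psi)$. Condition (ii) is the negation of the existence of a $\neg\Psi$-path from $s$ in $TS_{\alpha}(M)$ ending in $\neg\Phi\wedge\neg\Psi$, handled as in (4) with reachability restricted to $\neg\Psi$-states, and so equates to $Po_{<\alpha}(\neg\Psi\sqcup(\neg\Phi\wedge\neg\Psi))$. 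Conjoining yields the stated $\alpha$-equivalence.

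The delicate point expected to need the most care is the attainment-of-supremum argument underlying (3): one must confirm that finiteness of the value set $V=\{P(s,t):s,t\in S\}$ lets us replace the potentially infinite infimum $\bigwedge_{i\geq 0}P(s_i,s_{i+1})$ by a minimum over $V$, and then use a pigeonhole/pumping argument on the finite state space to extract an infinite witnessing path along which every transition has weight $\geq\alpha$ and every visited state satisfies $\neg\Phi$. Once that technical kernel is in place, parts (1)--(4) follow the pattern of Section 3.2 almost verbatim, with ``$>0$'' replaced by ``$\geq\alpha$'' and ``$=0$'' replaced by ``$<\alpha$''.
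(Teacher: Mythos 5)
Your proposal is correct and follows essentially the route the paper intends: the paper omits the proof of this proposition, remarking only that it parallels Section 3.1--3.2, i.e.\ one combines $\exists\varphi\equiv_{\alpha}Po_{\geq\alpha}(\varphi)$ with the negation law $Po_{<\alpha}(\varphi)\equiv\neg Po_{\geq\alpha}(\varphi)$ and the standard CTL dualities for $\forall$ (including the expansion of $\forall(\Phi\sqcup\Psi)$ used in part (2)). Your case-by-case semantic verification is exactly that argument written out, and you correctly isolate the technical kernel -- finiteness of the weight set making all suprema/infima attained, the weight-$1$ successors guaranteed by normalization, and the resulting absence of terminal states in $TS_{\alpha}(M)$ -- that the paper leaves implicit.
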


\begin{proposition}\label{pro:alternative proper}
For any $\alpha\in (0,1]$, there is no CTL formula that is ${\alpha}$-equivalent to $Po_{=1}(\lozenge a)$.
\end{proposition}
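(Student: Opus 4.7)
The approach mirrors the proof of Theorem \ref{th:proper} with $TS(M)$ replaced throughout by $TS_\alpha(M)$. The plan is to exhibit two finite possibilistic Kripke structures $M_1$ and $M_2$ over a common state space, labeling, and initial distribution that satisfy (i) $TS_\alpha(M_1) = TS_\alpha(M_2)$, so that $Sat_{TS_\alpha(M_1)}(\Phi) = Sat_{TS_\alpha(M_2)}(\Phi)$ for every CTL state formula $\Phi$, and (ii) some state $s_0$ lies in $Sat_{M_1}(Po_{=1}(\lozenge a))$ but not in $Sat_{M_2}(Po_{=1}(\lozenge a))$. Any CTL formula $\Phi$ with $\Phi \equiv_\alpha Po_{=1}(\lozenge a)$ would then force the two sets in (ii) to coincide through the common value $Sat_{TS_\alpha(M_i)}(\Phi)$, contradicting (ii) by the same chain of equalities as in \eqref{eq:not}--\eqref{eq:equal}.

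For the concrete construction I would recycle the skeleton of Figures 2 and 3: with $a\in L(s_3)$ only, $s_0$ the unique initial state, and an acyclic $s_0 \to s_3$ route through intermediate states. In $M_1$ every transition along one such route carries possibility $1$, so that some path from $s_0$ has $Po(\pi)=1$ and hence $Po(s_0\models\lozenge a)=1$. In $M_2$ exactly one transition on each such route is lowered to a fixed rational value $\gamma$ with $\alpha\le\gamma<1$; this keeps the edge in $TS_\alpha(M_2)$ because $\gamma\ge\alpha$, but caps every $\lozenge a$-path at infimum $\gamma$, so that $Po(s_0\models\lozenge a)=\gamma<1$. The normalization $\bigvee_t P(s,t)=1$ is maintained identically in both structures by possibility-$1$ side edges to benign (non-$a$) traps; because these side edges are identical in $M_1$ and $M_2$, they do not perturb the equality $TS_\alpha(M_1)=TS_\alpha(M_2)$.

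The main obstacle is reconciling the two competing requirements on the weakened edge: it must be present in $TS_\alpha(M_2)$, forcing $\gamma\ge\alpha$, and it must strictly lower the supremum defining $Po(s_0\models\lozenge a)$, forcing $\gamma<1$. A rational $\gamma\in[\alpha,1)$ exists precisely when $\alpha<1$, which is the regime in which the construction runs cleanly. The endpoint $\alpha=1$ is genuinely subtler, since in a finite structure $Po(s\models\lozenge a)=1$ holds iff some all-$1$ path survives in $TS_1(M)$, so that any counterexample at $\alpha=1$ would have to encode the distinguishing quantitative behaviour entirely outside the $P=1$ fragment; this case therefore demands a separately crafted witness rather than a direct adaptation of the Figures 2 and 3 structures. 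Once a valid choice of $\gamma$ and a patching of auxiliary edges are fixed, the contradiction between (i) and (ii) is immediate.
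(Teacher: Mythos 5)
For $\alpha\in(0,1)$ your construction is sound and is exactly the adaptation of Theorem~\ref{th:proper} that the paper intends (the paper offers no separate proof, only the remark that the arguments of Section~3.2 carry over): two structures on the same graph, one distinguished edge carrying possibility $1$ in $M_1$ and a rational $\gamma\in[\alpha,1)$ in $M_2$, so that the edge survives in both $TS_\alpha(M_i)$ while $Po(s_0\models\lozenge a)$ drops below $1$ only in $M_2$; the contradiction then runs through \eqref{eq:not}--\eqref{eq:equal} unchanged. Your handling of normalization via identical possibility-$1$ side edges to non-$a$ traps is also correct.

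The genuine gap is the endpoint $\alpha=1$, which you flag as ``subtler'' and defer to ``a separately crafted witness''---but no such witness exists, because the proposition is in fact \emph{false} at $\alpha=1$. In a finite possibilistic Kripke structure $P$ takes only finitely many values, so every $Po^{M_s}(\pi)=\bigwedge_i P(s_i,s_{i+1})$ lies in a finite set and the supremum defining $Po(s\models\lozenge a)$ is attained; hence $Po(s\models\lozenge a)=1$ iff some path from $s$ all of whose transitions have possibility exactly $1$ reaches an $a$-state, which is precisely $s\models\exists\lozenge a$ evaluated in $TS_1(M)$. That is, $Po_{=1}(\lozenge a)\equiv_1\exists\lozenge a$---consistent with the paper's own Proposition~\ref{pro:exists} ($\exists\varphi\equiv_\alpha Po_{\geq\alpha}(\varphi)$) at $\alpha=1$, and in direct contradiction with Proposition~\ref{pro:alternative proper} as stated. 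So your proof cannot be completed for all $\alpha\in(0,1]$: you should either restrict the claim to $\alpha\in(0,1)$, where your argument finishes the job, or record that the $\alpha=1$ case fails and exposes an inconsistency between the two propositions of Section~3.4.
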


The $\alpha$-equivalence of PoCTL and CTL formulae might be useful in the approximation of PoCTL formulae using CTL formulae. This would allow a graded approach to establish a level cut to decide e.g. when a transition with value $\alpha$ can be considered as existing or not. The general notion of $\alpha$-equivalence would be a very general approach such that the notions of equivalence (actually $>0-$equivalence) and 1-equivalence would come out as a limit case and particular case respectively. However, intuitively, 1-equivalence is too strong to define the equivalence of PoCTL and CTL formulae in the senses as explained below. By 1-equivalence, the possibility of a certain ``event'' is larger than 0 does not imply that the  ``event'' exists. For example, in Fig.3, intuitively, $s_0\models \exists\lozenge a$. However, by a simple calculation , we have $Po(s_0\models \lozenge a)=0.5<1$. It follows that $s_0\not\models Po_{=1}(\lozenge a)$, hence,  $s_0\not\models \exists\lozenge a$. Furthermore, intuitively, 1-equivalence is too strong for universal quantifier $\forall$. By Proposition \ref{pro:alternative forall}, the universal ``event'' means that the possibility of the negation of the ``event'' is less than 1. There are ``events'' such that the possibility of the negation of the ``events'' is less than 1 but there exist some paths that violate the ``events''. We shall give some analysis in the illustrative example in Section 5.

\section{PoCTL Model Checking}

Similar to classical and probabilistic CTL model-checking problems, the PoCTL model-checking problem can be stated as follows:

For a given finite possibilistic
Kripke structure $M$, state $s$ in $M$, and PoCTL state formula $\Phi$, decide whether $s\models\Phi$.

 We write $(M,s)\models \Phi$ for this PoCTL model-checking problem.

 As shown in the above section, PoCTL is more expressible than CTL. There are some PoCTL model-checking problems that can not be tackled by classical CTL model-checking algorithm. We shall present some methods to tackle PoCTL model-checking problems in this section. The technique of PoCTL model checking is very similar to those of classical and probabilistic CTL model checking. The difference lies in the operations involving in the processing of model checking.

 To determine whether $s\models\Phi$, we need to compute the satisfaction set $Sat(\Phi)$.
This is done recursively using a bottom-up traversal of the parse tree of  $\Phi$ with time complexity ${\cal O}(|\Phi|)$, where $|\Phi|$ denotes the number of subformulae of $\Phi$ (see the definition of $|\Phi|$ in Section 6.4.3 in \cite{Baier08}).  As for CTL model checking, the nodes of the parse tree represent the subformulae of $\Phi$. For each node
of the parse tree, which represents a subformula  $\Psi$ of  $\Phi$, the set  $Sat(\Psi)$ is calculated. If $\Psi$ is propositional logic formula, $Sat(\Psi)$ can be computed in exactly the same way as for
CTL. The left part is the treatment of subformulae of the form $\Psi=Po_{J}(\varphi)$. Since
\begin{eqnarray}\label{eq:sat-po}
Sat(Po_{J}(\varphi))=\{s\in S\mid Po(s\models\varphi)\in J\},
\end{eqnarray}
to calculate $Sat(\Psi)$, we need to compute the possibility $Po(s\models\varphi)$ for any state $s$.

There are three ways  to construct path formula $\varphi$, i.e., $\varphi=\bigcirc\Psi$, $\varphi=\Phi\sqcup^{\leq n}\Psi$ or $\varphi=\Phi\sqcup\Psi$ for some state formulae $\Phi$ and $\Psi$ and $n\in \mathbb{N}.$

For $\varphi=\bigcirc\Psi$, the next-step operator, the following equality holds:
\begin{center}
$Po(s\models\bigcirc\Psi)=\bigvee\limits_{s'\in Sat(\Psi)}P(s,s')$
\end{center}
where $P$ is the transition matrix of $M$. In the matrix-vector notation we thus
have that the (column) vector $(Po(s\models\bigcirc\Psi))_{s\in S}$ can be computed by multiplying $P$ with the characteristic
vector for $Sat(\Psi)$, i.e., (column) bit vector $(b_{s})_{s\in S}$ where $b_{s} = 1$ if and only if $s\in Sat(\Psi)$. Write $\chi_{\Psi}=(b_{s})_{s\in S}$, then we have
\begin{eqnarray}\label{eq:next operator}
(Po(s\models\bigcirc\Psi))_{s\in S}=P\circ \chi_{\Psi}.
\end{eqnarray}
It follows that, checking the next-step operator thus reduces to a single matrix-vector multiplication.

To calculate the possibility $Po(s\models\varphi)$ for until formulae $\varphi=\Phi\sqcup^{\leq n}\Psi$ or $\varphi=\Phi\sqcup\Psi$. Let $C=Sat(\Phi)$ and $B=Sat(\Psi)$, by its definition, we have
$$Po(s\models \Phi\sqcup^{\leq n}\Psi)=Po(s\models C\sqcup^{\leq n} B), \ {\rm and}$$
$$Po(s\models \Phi\sqcup\Psi)=Po(s\models C\sqcup B),$$
where $Po(s\models C\sqcup^{\leq n} B)=Po^{M_s}(\{\pi\in Paths(s) | \exists 0\leq j\leq n, \pi[j]\in B$ and for any $0\leq k<j$, $\pi(k)\in C\})$ and $Po(s\models C\sqcup B)=Po^{M_s}(\{\pi\in Paths(s) | \exists j\geq 0, \pi[j]\in B$ and for any $0\leq k<j$, $\pi(k)\in C\})$

We posed a least fixed point characterization to calculate $Po(s\models C\sqcup B)$ in \cite{li12}. In the following, we shall give a direct method to calculate $Po(s\models C\sqcup^{\leq n} B)$ and $Po(s\models C\sqcup B)$, which is completely different from the method used in probabilistic CTL model checking for until operator, where a linear equation system needs to be solved with more time complexity.

As done in \cite{li12}, let $S_{=0}, S_{=1}, S_{?}$ be a partition of $S$ such that,

 (1) $B\subseteq S_{=1} \subseteq\{s\in S| Po(s\models C\sqcup B)=1\}$;

 (2) $S\backslash (C\cup B)\subseteq S_{=0} \subseteq\{s\in S| Po(s\models C\sqcup B)=0\}$;

(3) $S_{?}=S\backslash(S_{=1}\cup S_{=0})$.

The above partition of $S$ always exists. For example, we can take $S_{=1}=B$, $S_{=0}=S\backslash (C\cup B)$ and $S_{?}=S\backslash(S_{=1}\cup S_{=0})=C-B$. Note that the technique and notations used here have been adopted from probabilistic CTL model checking \cite{Baier08}.

      For all state $s$, write $$x_s=Po(s\models C\sqcup^{\leq n}B).$$ If $s\in S_{=1}$, we have $Po(s\models C\sqcup^{\leq n} B)=1$; if $s\in S_{=0}$, $Po(s\models C\sqcup^{\leq n} B)=0$; if $s\in S_{?}$, we can get a
fuzzy matrix $P_?=(P_?(s,t))_{s,t\in S}$ by letting $P_?(s,t)=P(s,t)$ whenever $s,t\in S_?$ and $0$ otherwise. The left is to give a method to calculate $(x_s)_{s\in S_?}$.

By the definition of $C\sqcup^{\leq n}B$, we have
\begin{eqnarray*}
&&\{\pi\in Paths(s) | \pi\models C\sqcup^{\leq n} B\}\\
&=&\{\pi\in Paths(s) | \exists k\leq n,\ {\rm if}\ 0\leq i<k, \pi(i)\in C, \ {\rm and}\ \pi(k)\in B\}\\
&=& \bigcup\{Cyl(s_0\cdots s_kt) | 0\leq k\leq n, s_0=s, s_1,\cdots,s_k\in C\ {\rm and}\ t\in B\}.
 \end{eqnarray*}
Hence, \begin{eqnarray*}
&&Po(s\models C\sqcup^{\leq n} B)\\
&=&\bigvee_{k=0}^n\bigvee\{Po(Cyl(s_0\cdots s_k t) | s_0=s, s_1,\cdots,s_k\in C\ {\rm and}\ t\in B\}.
 \end{eqnarray*}

Write $\chi_s=(a_t)_{t\in S_?}$ for the (row) characteristic vector for the singleton $\{s\}$, i.e., $a_t=1$ if $t=s$ and $a_t=0$ if $t\not=s$; $\chi_B=(b_t)_{t\in S_?}$ for the (column) characteristic vector for $B$, i.e., $b_t=1$ if $t\in B$ and $0$ otherwise. By a simple calculation, we have
$$\bigvee\{Po(Cyl(s_0\cdots s_k t) |s_0=s, s_1,\cdots,s_k\in C\ {\rm and}\ t\in B\}=\chi_s\circ P_?^{k}\circ P \circ \chi_B$$ for any $k$. It follows that $$x_s=Po(s\models C\sqcup^{\leq n} B)=\bigvee_{k=0}^n\chi_s\circ P_?^k\circ P \circ \chi_B=\chi_s\circ \bigvee_{k=0}^n P_?^k\circ P \circ \chi_B.$$ If we write $P_?^{\leq n}=\bigvee_{k=0}^n P_?^k$, where $P_?^0$ is the identity matrix, i.e., $P_?^0(s,s)=1$ and $0$ otherwise, then $$x_s=Po(s\models C\sqcup^{\leq n} B)=\chi_s\circ  P_?^{\leq n}\circ P \circ \chi_B.$$

Hence, if we write $\chi_?=(\chi_?(s,t))_{s\in {S_?},t\in S}$ as the characteristic matrix for $S_?$ in $S$, i.e., $\chi_?(s,s)=1$ for $s\in S_?$ and $0$ otherwise, then we have
\begin{eqnarray}\label{eq:until-n?}
(x_s)_{s\in S_?}=\chi_?\circ P_?^{\leq n}\circ P\circ \chi_B.
\end{eqnarray}
To calculate $(x_s)_{s\in S_?}$, it is sufficient to perform matrix multiplication at most $n+3$ times. Observe that, if $n\geq |S_?|$, then $P_?^{\leq n}=P_?^0\vee P_?^+$, which is denoted by $P_?^{\ast}$. Then $P_?^{\ast}$ is the reflexive and transitive closure of the fuzzy matrix $P_?$. In this case, we have
\begin{eqnarray}\label{eq:until-n-?}
(x_s)_{s\in S_?}=\chi_?\circ P_?^{\ast}\circ P\circ \chi_B.
\end{eqnarray} In particular, we have
\begin{eqnarray}\label{eq:until-?}
(x_s)_{s\in S_?}=(Po(s\models C\sqcup B))_{s\in S_?}=\chi_?\circ P_?^{\ast}\circ P\circ  \chi_B.
\end{eqnarray}

In summary, we have
\begin{eqnarray}\label{eq:until-n}
 x_s=Po(s\models C\sqcup^{\leq n} B)=\left\{
\begin{array}{cc}
1,& $if$\ s\in S_{=1},\\
0,& $if$\ s\in S_{=0},\\
\chi_s\circ P_?^{\leq n}\circ P \circ \chi_B,& $if$\ s\in S_?.\\
\end{array}
\right.
 \end{eqnarray}

 In particular, if $n\geq |S_?|$, we have
 \begin{eqnarray}\label{eq:until}
 x_s=Po(s\models C\sqcup^{\leq n} B)=Po(s\models C\sqcup B)=\left\{
\begin{array}{cc}
1,& $if$\ s\in S_{=1},\\
0,& $if$\ s\in S_{=0},\\
\chi_s\circ P_?^{\ast} \circ P\circ \chi_B,& $if$\ s\in S_?.\\
\end{array}
\right.
 \end{eqnarray}

In the calculation of $(x_s)_{s\in S}$, we only need to perform (fuzzy) matrix multiplication at most $N(=|S|)+3$ times. It follows that the time complexity of PoCTL model checking of a finite possibilistic Kripke structure $M$ and a PoCTL formula $\Phi$ can be presented as follows.

\begin{theorem}\label{th:time of PoCTL} ({\rm{Time Complexity of PoCTL Model Checking}})
For a finite possibilistic Kripke structure $M$, state $s$ in $M$, and a PoCTL formula $\Phi$, the PoCTL model-checking problem $(M,s)\models \Phi$ can be determined in time ${\cal O}(size(M)\cdot N\cdot |\Phi|)$, where $|\Phi|$ denotes the number of subformulae of $\Phi$.
\end{theorem}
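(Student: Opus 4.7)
The plan is to mirror the standard recursive, bottom-up CTL model checking strategy: traverse the parse tree of $\Phi$ from leaves to root, computing $Sat(\Psi)$ for every subformula $\Psi$, and finally test whether $s\in Sat(\Phi)$. Since the parse tree has $|\Phi|$ nodes, the total running time will be bounded by $|\Phi|$ times the worst-case cost of handling a single node, so the task reduces to showing that each node can be processed in time ${\cal O}(size(M)\cdot N)$.

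I would then do a case analysis on the shape of the subformula $\Psi$ at the current node. The propositional cases ($true$, $a\in AP$, $\neg\Psi'$, $\Psi_1\wedge\Psi_2$) are handled by simple set operations on $S$ in time ${\cal O}(|S|)$, exactly as in CTL. The interesting cases are the probabilistic-style ones $\Psi=Po_J(\varphi)$, where one reads off $Sat(\Psi)$ from the vector $(Po(s\models\varphi))_{s\in S}$ using (\ref{eq:sat-po}), so it suffices to compute this vector within the stated budget. For $\varphi=\bigcirc\Psi'$ the formula (\ref{eq:next operator}) reduces the work to a single fuzzy matrix--vector product $P\circ\chi_{\Psi'}$, which, being a max--min composition over the nonzero entries of $P$, costs ${\cal O}(size(M))$. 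For $\varphi=\Phi_1\sqcup^{\leq n}\Phi_2$ and $\varphi=\Phi_1\sqcup\Phi_2$ I would invoke the closed-form expressions (\ref{eq:until-n}) and (\ref{eq:until}): after the $S_{=0},S_{=1},S_?$ partition (computable in time ${\cal O}(size(M))$ using reachability), the cost is dominated by forming $P_?^{\leq n}$ (or its stabilisation $P_?^{\ast}$ when $n\geq |S_?|$).

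The key quantitative step is to bound the cost of $P_?^{\ast}$. Using the identity $P_?^{+}=P_?\vee P_?^{2}\vee\cdots\vee P_?^{N}$ cited from \cite{li05}, the closure is obtained by at most $N$ max--min matrix multiplications; each such product of sparse fuzzy matrices can be performed in time ${\cal O}(size(M)\cdot N)$ by iterating over the nonzero entries of one factor and the rows/columns of the other. The subsequent compositions with $\chi_?$, $P$, and $\chi_B$ add at most a constant number of further such multiplications. For the bounded-until case with $n<|S_?|$, only $n+O(1)\leq N+O(1)$ multiplications are performed, with the same per-multiplication bound; for $n\geq|S_?|$ we fall back on $P_?^{\ast}$. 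In either regime, processing a single $Po_J(\varphi)$-node costs ${\cal O}(size(M)\cdot N)$.

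Summing over the $|\Phi|$ nodes of the parse tree gives the announced bound ${\cal O}(size(M)\cdot N\cdot|\Phi|)$. The main obstacle I expect is not conceptual but bookkeeping: carefully justifying that a max--min product of two fuzzy matrices associated with a possibilistic Kripke structure can indeed be carried out in ${\cal O}(size(M)\cdot N)$ rather than the naive ${\cal O}(N^{3})$, and that the $S_{=0},S_{=1}$ sets (which control whether unnormalised rows behave correctly in (\ref{eq:until})) can be extracted by a simple graph traversal on the underlying digraph $TS(M)$ without inflating the complexity. Once these two accounting points are settled, the theorem follows directly from the case analysis above.
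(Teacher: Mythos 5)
Your overall architecture matches the paper's: a bottom-up traversal of the parse tree with $|\Phi|$ nodes, propositional cases handled as in CTL, the next-step case reduced to one fuzzy matrix--vector product via (\ref{eq:next operator}), and the until cases reduced to the closed forms (\ref{eq:until-n}) and (\ref{eq:until}). The gap is in your accounting for the until case. You propose to form the matrix $P_?^{\leq n}$ (or $P_?^{\ast}$) explicitly via the identity $P_?^{+}=P_?\vee P_?^{2}\vee\cdots\vee P_?^{N}$, i.e.\ by at most $N$ max--min matrix--matrix multiplications, and you correctly observe that each such product costs ${\cal O}(size(M)\cdot N)$ when one factor is sparse. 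But $N$ multiplications at ${\cal O}(size(M)\cdot N)$ each gives ${\cal O}(size(M)\cdot N^{2})$ for a single $Po_J(\varphi)$-node, so your concluding sentence ``in either regime, processing a single $Po_J(\varphi)$-node costs ${\cal O}(size(M)\cdot N)$'' does not follow from your own computation; it overshoots the theorem's bound by a factor of $N$. The obstacle you flag at the end (avoiding the naive ${\cal O}(N^{3})$ per product) is real but, even once settled, does not close this gap.

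The fix, which is what the paper's expression $\chi_s\circ P_?^{\leq n}\circ P\circ\chi_B$ is implicitly exploiting, is to never materialise the matrix powers at all: associate the product from the right and compute only vectors. Set $w=P\circ\chi_B$ and iterate $u^{(0)}=w$, $u^{(k+1)}=w\vee\bigl(P_?\circ u^{(k)}\bigr)$, so that $u^{(k)}=\bigvee_{j=0}^{k}P_?^{j}\circ w$. Each iteration is one matrix--vector max--min product over the nonzero entries of $P_?$ plus a componentwise maximum, i.e.\ ${\cal O}(size(M))$, and at most $\min(n,|S_?|)+{\cal O}(1)\leq N+{\cal O}(1)$ iterations are needed (the paper's ``at most $N+3$ multiplications''). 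Restricting to $S_?$ and reading off the answer then gives ${\cal O}(size(M)\cdot N)$ per node, and summing over the $|\Phi|$ subformulae yields the stated bound. With this replacement for your closure step, the rest of your case analysis goes through.
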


\section{An illustrative example}

We now give an example to illustrate the PoCTL model-checking approach presented in this paper. The same example is used in \cite{li12} to illustrate the application of model checking of linear-time properties based on possibility measures. Note that this is a demonstrative rather than a case study aimed at showing the scalability of our approach.

Suppose that there is an animal with a new disease. For the new disease, the doctor has no complete knowledge about it, but he (or she) believes by experience that the drug Ribavirin may be useful for the treating the disease.

For simplicity, it is assumed that the doctor considers roughly the animal's condition to be three states, say, ``poor'', ``fair'' and ``excellent''. It is vague when the animal's condition is said to be ``poor'', ``fair'' and ``excellent''. The doctor will use the fuzzy set (called fuzzy state in the following) over states ``poor'', ``fair'' and ``excellent''   to describe the animal condition (see \cite{lin02,cao06,liu09} for more explanations). Therefore, when a possibilistic Kripke structure is used to model the treatment processes of the animal, a fuzzy state is naturally denoted as a three-dimensional vector $[a_1, a_2, a_3]$, which is represented as the possibility distribution of the animal's condition over states ``poor'', ``fair'' and ``excellent''.

Similarly, it is imprecise to say that at what point exactly the animal has changed from one state to another state after a drug treatment (i.e., event), because the drug event occurring may lead a state to fuzzy state ``poor'', ``fair'' and ``excellent''. Therefore, the treatment process is modeled by a possibilistic Kripke structure, in which a transition possibility distribution is represented by a $3\times 3$ matrix.

Suppose that the treatment process of the animal is modeled by the following possibilistic Kripke structure $M=(S,P,I,AP,L)$, where
$S=AP=\{poor, fair$, $excellent\}$,

$P=\left(\begin{array}{cccc}
0.2&1&1\\
0.2&0.5&1\\
0.5&1&0.5
\end{array}
\right)$,
$I=\left(\begin{array}{cccc}
1\\
0\\
0
\end{array}
\right)$,

\noindent and $L(s)=\{s\}$ for any $s\in S$.

The structure $M$ is presented in Fig.4, and the corresponding $M^+$ is presented in Fig.5, where we use the symbols $p, f, e$ to represent the states or the atomic propositions ``poor'', ``fair'' and ``excellent'' respectively.

\begin{figure}[ht]
\begin{center}
\includegraphics[scale=0.5]{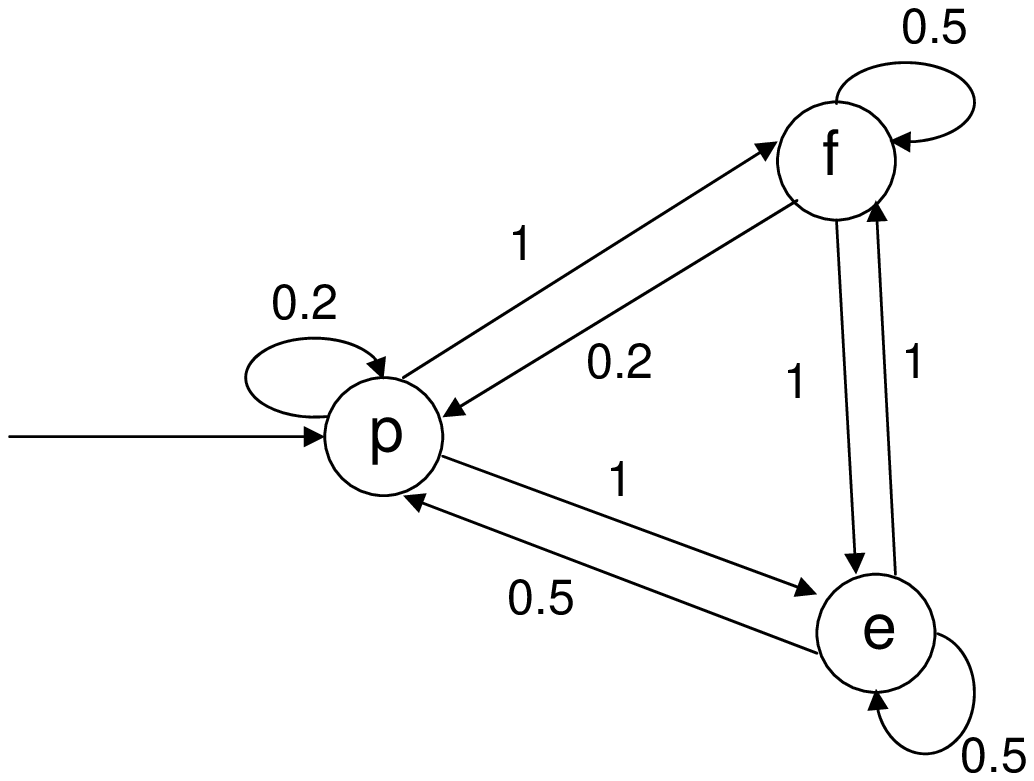}
\center{Fig.4.}The possibilistic Kripke structure $M$ for the treatment process of the animal.
\vspace{-0.3cm}
\end{center}
\end{figure}

\begin{figure}[ht]
\begin{center}
\includegraphics[scale=0.5]{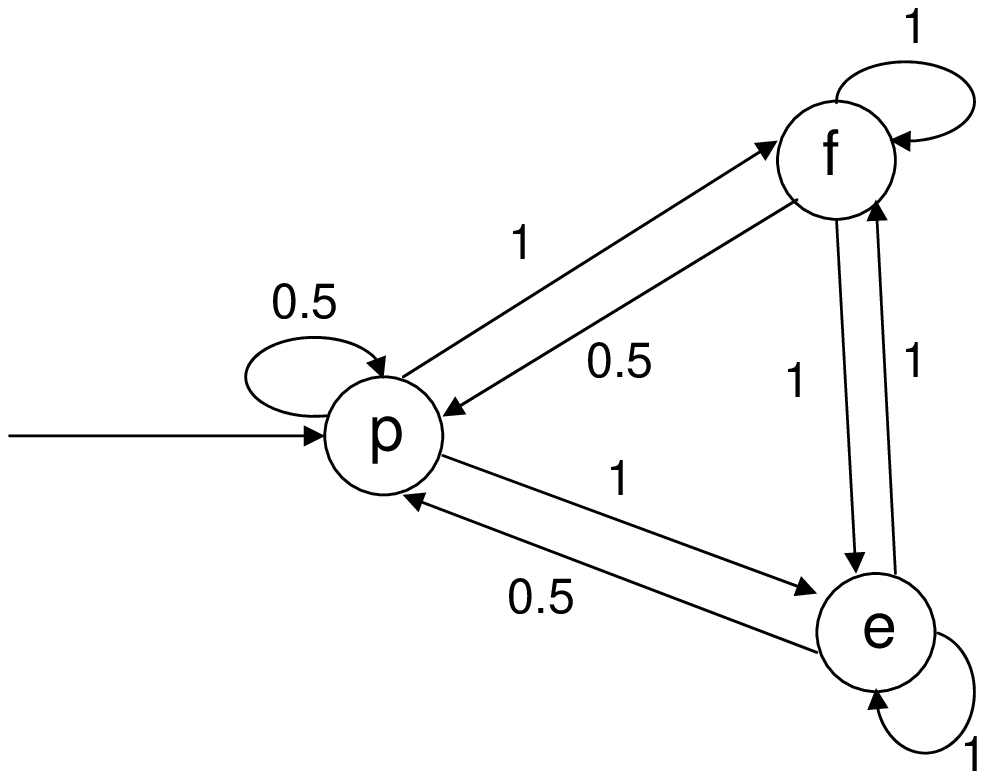}
\center{Fig.5.}The corresponding $M^+$ of $M$ in Fig.4.
\vspace{-0.3cm}
\end{center}
\end{figure}

By a simple calculation, we have

$P^+=\left(\begin{array}{cccc}
0.5&1&1\\
0.5&1&1\\
0.5&1&1
\end{array}
\right)$.

Some calculations are presented as follows in detail.

(1) Let us calculate $Po(poor\models \{poor\}\sqcup^{\leq 7}\{excellent\})$. In this case, let us take $S_{=1}=\{excellent\}$, $S_{=0}=\{fair\}$, and $S_{?}=\{poor\}$. It follows that,
$P_?=\left(\begin{array}{cccc}
0.2&0&0\\
0&0&0\\
0&0&0
\end{array}
\right)$, and then
$P_?^{\ast}=\left(\begin{array}{cccc}
1&0&0\\
0&1&0\\
0&0&1
\end{array}
\right)$.
By Eq.\ref{eq:until}, we have
\begin{eqnarray*}
&& Po(poor\models \{poor\}\sqcup^{\leq 7}\{excellent\})\\
&=& \left(\begin{array}{ccc}
                         1&0&0
                       \end{array}
                       \right)\circ P_?^{\leq 7}\circ P \circ \left(\begin{array}{c}
                         0 \\
                         0 \\
                         1
                       \end{array}
                       \right)\\
&=& \left(\begin{array}{ccc}
                         1&0&0
                       \end{array}
                       \right)\circ P_?^{\ast} \circ P \circ \left(\begin{array}{c}
                         0 \\
                         0 \\
                         1
                       \end{array}
                       \right)\\
&=&1.\\
 \end{eqnarray*}
  Hence, $poor\models Po_{=1}(\{poor\}\sqcup^{\leq 7}\{excellent\})$. It means that the animal will be recovered after one week treatment with possibility $1$.

(2) Since \begin{eqnarray*}
 Po(poor\models \lozenge\{excellent\})&=& Po(poor\models true\sqcup\{excellent\})\\
&=& \left(\begin{array}{ccc}
                         1&0&0
                       \end{array}
                       \right)\circ \left(\begin{array}{cccc}
                         1&0.5&0\\
                         0.2&1&0\\
                         0&0&1\\
                       \end{array}
                       \right)\circ \left(\begin{array}{cccc}
0.2&1&1\\
0.2&0.5&1\\
0.5&1&0.5
\end{array}
\right)\circ \left(\begin{array}{c}
                         0 \\
                         0 \\
                         1
                       \end{array}
                       \right)\\
&=&1.\\
 \end{eqnarray*}
 In this case, we take $S_{=1}=\{excellent\}$, $S_{=0}=\emptyset$ and $S_{?}=\{poor, fair\}$.

 Hence, $poor\models Po_{=1}(\lozenge\{excellent\})$.

 (3) We have $poor\not\models\forall\lozenge\{excellent\}$. The reason is as follows. By Proposition \ref{pro:forall}(3), we have $$\forall\lozenge\{excellent\}\equiv Po_{=0}(\square\neg\{excellent\}).$$ Let us calculate $Po(s\models\square\neg \{excellent\})$, where $s=poor$: $$Po(s\models\square\neg \{excellent\})=Po^{M_s}(\{\pi\in paths(s) | \pi\models \square\neg \{excellent\})=Po^{M_s}(pf^{\omega})=0.5>0.$$ Hence, $s=poor\not\models Po_{=0}(\square\neg\{excellent\})$, i.e., $poor\not\models\forall\lozenge\{excellent\}$.

 Since $\forall\lozenge\{excellent\}\equiv_1 Po_{<1}(\square\neg\{excellent\})$, and $Po(poor\models \square\neg\{excellent\})=0.5<1$, it follows that $poor\models_1 \forall\lozenge\{excellent\}$ if we adopt 1-equivalence. This is too strong, since we still have the event $p^{\omega}$ (with possibility 0.2) and the event $pf^{\omega}$ (with possibility 0.5), and the above two events (may occur) violate the property $\forall\lozenge\{excellent\}$.

 (2) and (3) show that $\forall\lozenge \Phi\equiv Po_{=1}(\lozenge \Phi)$ does not hold in PoCTL.

 (4) Let $s=poor$, $a=excellent$, by Theorem \ref{th:repeated reachability}, we have $$s\models Po_{=1}(\square Po_{=1}(\lozenge a))\ {\rm iff}\ Po(s\models \square\lozenge a)=1.$$ It has been shown that $Po(s\models \square\lozenge a)=1$ in \cite{li12}. Then we know that $$s\models Po_{=1}(\square Po_{=1}(\lozenge a)).$$

 (5) Since $Po(s\models \square\neg \{poor\})=0$, where $s=poor$. It follows that $poor\models \forall\lozenge \{poor\}$.

\section{Conclusion}

This paper is a continuation of previous work in the papers \cite{li12,Xue09}, where LTL model checking based on possibility measures and possibilistic CTL were introduced. We further studied the expressiveness of PoCTL and PoCTL model cheking in this paper, which was not considered in \cite{li12,Xue09}. The main contribution of this paper is as follows. We showed that (qualitative) PoCTL is more powerful than CTL with respect to their expressiveness. In particular, we have shown that any CTL formula is equivalent to a qualitative PoCTL formula. Some basic PoCTL formulae that are not equivalent to any CTL formulae were also given. Some qualitative repeated reachability and persistence properties were expressed using PoCTL formulae. The PoCTL model checking problem was discussed in detail. The method of PoCTL model checking were given and its time complexity was analyzed.

This is the first step of PoCTL model checking. There are many things that can be done based on this.

As we know, there are many industrial model checkers related to CTL model checking, including SMV (\cite{McMillan93}) and NuSMV. Since CTL is a proper subclass of PoCTL, it is necessary to set up some model checker corresponding to PoCTL model checking. The equivalence and abstraction technique corresponding to PoCTL model checking are also necessary to be investigated in the future work.

Of course, the research directions related to possibilistic LTL model checking posed in \cite{li12} can also be
applied to PoCTL model checking. We list three of them as follows.

\begin{itemize}
\item We use max-min composition of fuzzy relations in this paper. There are other forms of composition of fuzzy relations, such as max-product composition which are useful for the applications of fuzzy sets. Then the related work using other composition instead of max-min composition can be done in the future.

\item We use the normal possibility distributions in this paper (see conditions (2) and (3) in the definition of possibilistic Kripke structure). How to deal with those possibilistic Kripke structures which do not satisfy conditions (2) and (3) is another future direction to study.

\item In the definition of possibilistic Kripke structures, the labeling function $L: S\rw 2^{AP}$ is crisp, there is no vagueness at all here. This restriction is too strict. How to dealt with the possibilistic Kripke structures with uncertainty labeling function in PoCTL is still another issue needed to be discussed further. Although we can transform a possibilistic Kripke structure with uncertainty labeling function into a possibilistic Kripke structure with classical labeling function as noted in \cite{li12}, a direct method using possibilistic Kripke structures with uncertainty labeling functions still deserves study.

\end{itemize}

\section*{Acknowledgments}

The authors would like to thank the anonymous referees for helping them refine the ideas
presented in this paper and improve the clarity of the presentation. The authors would also like to express their special thanks to Dr. Licong Cui at Case Western Reserve University for detailed
suggestions that improved the paper's quality.

\end{document}